\theoremstyle{plain}
\newtheorem{thm}{Theorem}[section]
\newtheorem{lem}[thm]{Lemma}
\newtheorem{prop}[thm]{Proposition}
\theoremstyle{definition}
\theoremstyle{remark}
\newtheorem{rem}[thm]{Remark}
\newcommand{\mL}{\mathcal L}
\newcommand{\N}{\mathds{N}}
\newcommand{\Z}{\mathds{Z}}
\newcommand{\R}{\mathds{R}}
\newcommand{\C}{\mathds{C}}
\newcommand{\id}{\mathds{1}}            % Identitaet
\newcommand{\hilbert}{{\mathcal H}}      % Hilbert-Raum
\DeclareMathOperator{\re}{Re}            % Realteil
\DeclareMathOperator{\sign}{sign}        % Signum
\DeclareMathOperator{\tr}{tr}            % Trace, Spur
\newcommand{\e}{\mathrm{e}}   %Eulerzahl
\newcommand{\I}{\mathrm{i}}    %imaginaere Einheit
\newcommand{\di}{\mathrm{d}}    %Differenzial
\long\def\MSC#1\EndMSC{\def\arg{#1}\ifx\arg\empty\relax\else
     {\narrower\noindent%
{2010 Mathematics Subject Classification}: #1\\} \fi}
\long\def\PACS#1\EndPACS{\def\arg{#1}\ifx\arg\empty\relax\else
     {\narrower\noindent%
{PACS numbers}: #1}\fi}
\long\def\KEY#1\EndKEY{\def\arg{#1}\ifx\arg\empty\relax\else
	{\narrower\noindent% 
Keywords: #1\\}\fi}
\title{
Anderson's orthogonality catastrophe in one dimension induced by a magnetic field
}
\author{
Hans Konrad Kn\"orr\thanks{hanskonrad.knoerr@fernuni-hagen.de} \and
Peter Otte\thanks{peter.otte@rub.de} \and
Wolfgang Spitzer\thanks{wolfgang.spitzer@fernuni-hagen.de}\\
{\small{FernUniversit\"at in Hagen, Fakult\"at f\"ur Mathematik und Informatik,}}\\[-1ex] {\small{LG Angewandte Stochastik, 58084 Hagen, Germany}}
}
\date{%\today
}
\begin{document}
\maketitle
\vspace{-5ex}
\abstract{According to Anderson's orthogonality catastrophe, the overlap of the $N$-particle ground states of a free Fermi gas with and without an (electric) potential decays in the thermodynamic limit. For the finite one-dimensional system various boundary conditions are employed. Unlike the usual setup the perturbation is introduced by a magnetic (vector) potential. Although such a magnetic field can be gauged away in one spatial dimension there is a significant and interesting effect on the overlap caused by the phases. We study the leading asymptotics of the overlap of the two ground states and the two-term asymptotics of the difference of the ground-state energies. In the case of periodic boundary conditions our main result on the overlap is based upon a well-known asymptotic expansion by Fisher and Hartwig on Toeplitz determinants with a discontinuous symbol. In the case of Dirichlet boundary conditions no such result is known to us and we only provide an upper bound on the overlap, presumably of the right asymptotic order. 
}\\

\KEY
Fermi gas, Toeplitz matrix, Hilbert matrix, Szeg{\H o} limit theorem, Fisher--Hartwig con\-jec\-ture
\EndKEY
%\\
%\PACS
%71.10.Ca, % Electron gas, Fermi gas
%03.65.Db, % functional analytic methods
%02.30.Tb %
%\EndPACS

\section{Introduction}
In 1967, P.W. Anderson \cite{And-1967(PRL)} discovered that in the limit $N \to \infty$ the ground state of a free gas of $N$ non-interacting fermions is orthogonal to the ground state of the same system but perturbed by an external potential. The asymptotic behaviour of the overlap of the two ground states is of the order $N^{-\gamma}$ with a constant $\gamma>0$ which depends on scattering parameters of the potential. 30 years later, I. Affleck \cite{Aff-1997} related Anderson's orthogonality catastrophe (AOC) to an asymptotic expansion of the difference of the ground-state energies of the free and the perturbed system. Assuming that the latter is of the asymptotic form $c_0 + c_1/N + o(1/N)$ Affleck argued that $c_1=\gamma$. Here, the leading term, $c_0$, is called Fumi term and the coefficient, $c_1$, of the next-to-leading term is called finite-size energy.  

In this paper, we study AOC and the difference of the ground-state energies for a one-dimensional system which is not perturbed by an external potential but rather by an external magnetic field given by a magnetic (vector) potential $a$. We confine the particles to the interval $[-L,L]$, impose periodic or Dirichlet boundary conditions, and consider the thermodynamic limit $N,L \to \infty$ where the particle density $\frac{N}{2L}$ tends to a fixed finite value $\rho>0$. In \cite{LLL-1996}, quantum fluctuations of the current in a metallic loop were studied. This led to a logarithmically divergent term similar to the orthogonality catastrophe in the present paper.

It is well-known that in one spatial dimension the Hamiltonian with magnetic field can be mapped to a free Hamiltonian (with the same or slightly modified eigenvalues as the free problem) by a gauge transformation. Note that while the ground-state energy is changed little (or not at all) by the magnetic field, the effect on the overlap of the ground states is surprisingly strong. Therefore, we cannot expect that Affleck's identification of the AOC exponent with the finite-size energy holds. 

The main part of this paper is dedicated to the asymptotic analysis of the overlap of the ground state of the free Hamiltonian and the one with a magnetic field. For periodic boundary conditions, the leading order of the asymptotic expansion of the overlap of the two ground states is $c N^{-2 \delta^2/\pi^2}$ with a finite constant $c>0$. A precise formulation of this result is given in Theorem~\ref{thm-main}. Here, $\delta$ is half of the integral of $a$ (except for additive multiples of $\pi$). This behaviour is remindful of the one for potentials. In Theorem~\ref{thm: upper bound} we give an asymptotic upper bound of this overlap, namely $\tilde{c} N^{- 2 \sin^2(\delta)/\pi^2}$ with a finite constant $\tilde{c}>0$. This second result holds for periodic as well as Dirichlet boundary conditions. It is noteworthy that the obtained exponent does not depend on the specific form of the magnetic potential $a$ but only on its integral $\delta$ which is related to the magnetic flux. 

For the proof of both theorems we rewrite the inner product of the two ground states as a Toeplitz determinant with an $L$-dependent symbol.
Due to this $L$-dependence, classical Szeg{\H{o}} limit theorems including those for Fisher--Hartwig symbols cannot be applied directly.
The crucial step of our proofs is that this determinant is factored into a determinant, which is asymptotically of order 1, times a Toeplitz determinant with a discontinuous $L$-dependent symbol.
This discontinuous symbol in turn can be interpreted as corresponding to an effective flux which looks like the one induced by Dirac's $\delta$-function as a magnetic potential.
In the case of periodic boundary conditions the leading $N$-asymptotics of this particular Toeplitz determinant with discontinuous symbol is given by a Toeplitz determinant with an $L$-independent symbol of the Fisher--Hartwig class.
This exact Toeplitz determinant has been analysed by Fisher and Hartwig in their 1968 paper, which they used to bring forward a more general conjecture on the asymptotics for a certain kind of discontinuous and singular symbols \cite{FH-1968}. Using their result we find the leading asymptotics of the overlap of the respective ground states. An upper bound is determined by using the inequality $\det(\id+A) \leq \e^{\,\tr\! A}$, where the exponent on the right-hand side is the so-called Anderson integral. We study this Anderson integral to obtain the asymptotic upper bound for periodic boundary conditions. For Dirichlet boundary conditions we choose a different approach which requires the asymptotic analysis of a determinant of a Toeplitz plus a Hilbert matrix.

In the following section we specify the quantum mechanical model and fix the notation. As a first simple result we obtain the Fumi term and the finite-size energy in Subsection~\ref{sec: FSE}. The precise formulation of our main results on Anderson's orthogonality catastrophe, Theorems~\ref{thm-main} and \ref{thm: upper bound}, is given in Subsection~\ref{sec: AOC}. In Section~\ref{sec: Toeplitz} we prove these two theorems. For this purpose we introduce the notion of generalised Toeplitz matrices. In Lemma~\ref{lem-main} we give a decomposition of the overlap of the ground states into a bounded part and a Toeplitz determinant with discontinuous symbol as mentioned above. This lemma is proved in Section~\ref{sec: decomposition}. In Subsection~\ref{sec: periodic proof} we show Theorem~\ref{thm-main}. Subsections~\ref{sec: proof 2} is devoted to the proof of Theorem~\ref{thm: upper bound}.

Finally, we comment on the mathematical work in this field all of which is rather recent. The papers \cite{KOS-2013,GKM-2014,GKMO-2014A} focus on AOC (caused by potentials and not magnetic fields) and prove upper bounds on the overlap which are all in agreement with Anderson's bounds. In \cite{Geb-2014A} and \cite{OS-2014WiP}, the finite-size energy and its relation to the coefficient $\gamma$ in AOC is studied in a one-dimensional Fermi system. Yet, we do not know whether Anderson's prediction about the decay rate $\gamma$ of the overlap is sharp. However, in the model of this paper we are able to prove the sharp decay rate. But we also notice that Affleck's correspondence between the coefficient $\gamma$ and the finite-size correction does not hold here. This should not diminish the value of this fascinating relation in the case of a potential.

%%%%%%%%%%%%%%%%%%%%%%%%%%%%%%%%%%%%%%%%%%%%%%%%%%%%%%%%%%
\section{Model and results}
%%%%%%%%%%%%%%%%%%%%%%%%%%%%%%%%%%%%%%%%%%%%%%%%%%%%%%%%%%

%%%%%%%%%%%%%%%%%%%%%%%%%%%%%%%%%%%%%%%%%%%%%%%%%%%%%%%%%%
\subsection{The Hamiltonian for a single particle}\label{sec: 1fermionenergy}

We consider a single, spinless particle confined to the interval $\left[-L,L\right] \subset \R$.
The Hilbert space is $\hilbert_L := \mL^2([-L,L])$, equipped with the usual inner product given by
$\langle f,g \rangle := \int_{-L}^L \overline{f}(x)\, g(x)\,\di x$. 
We study the system for periodic and for Dirichlet boundary conditions
where arguments and results for the latter can be taken over to Neumann boundary conditions
(or more general Robin boundary conditions).
We denote the eigenfunctions, eigenvalues etc.~for different boundary conditions by the same respective symbols.
It will become clear from the context which boundary conditions are used.

\paragraph{Periodic boundary conditions}
The Hamiltonian of a single free particle on this interval is
$H_{L} := -\frac{\di^2}{\di x^2}$ with periodic boundary conditions. 
The energy eigenvalues denoted by $\lambda_L$ and the
corresponding eigenstates are determined by
\begin{align}
&-\frac{\di^2 \varphi_L}{\di x^2}(x) = \lambda_L\varphi_L(x) \,,\ x \in \left]-L,L\right[\,,\label{eq: free de}\\
&\varphi_L(-L) - \varphi_L(L)=0\,,\quad\frac{\di \varphi_L}{\di x} (-L) - \frac{\di \varphi_L}{\di x} (L)=0\,.\label{eq: free periodic bc}
\end{align}
Then we have, normalised to one in $\mL^2$-norm,
$\varphi_{L,j} (x) = (2L)^{-\frac12} \exp\left(- \I \frac{\pi j}{L}x\right)$ and $\lambda_{L,j} = \left( \frac{\pi j}{L} \right)^2$, $j \in \Z$.
Note that all eigenvalues except for $\lambda_{L,0}=0$ are twofold degenerate and
the eigenvalue 0 is non-degenerate.

Let $a \in W^{1,1}(\R)$, the Sobolev space of integrable and continuously differentiable
functions with integrable derivatives.
We denote the multiplication operator given by $(af)(x)=a(x)f(x)$ by $a$ as well.
In order to have a well-defined operator on $\hilbert_L$ satisfying periodic boundary conditions,
we additionally assume
$a(L) = a(-L)$.
The Hamiltonian of the system with the external magnetic (vector) potential $a$ is
  $H_{a,L} := (-\I \frac{\di}{\di x} - a)^2$.
The spectrum $\sigma(H_{a,L})$ is determined by the eigenvalue problem
\begin{align}
&\left(-\I \frac{\di}{\di x} - a(x)\right)^2\psi_L(x) = \mu_L \psi_L(x)\,,\ x \in \left]-L,L\right[\,,\label{eq: perturbed de}\\
&\psi_L(-L)-\psi_L(L)=0\,,\quad
\frac{\di \psi_L}{\di x} (-L) - \frac{\di \psi_L}{\di x}(L)=0\,.\label{eq: perturbed periodic bc}
\end{align}
This eigenvalue problem can be solved using the gauge transformation $\e^{\I \Phi_L}$
where $\Phi_L$ is given by
\begin{align}\label{eq-magneticflux}
\Phi_L(x) := \frac12 \int\limits_{-L}^x a(y)\,\di y - \frac12 \int\limits_x^L a(y)\,\di y \,,\ x \in \left[-L,L\right]\,.
\end{align}
The quantity $\Phi_L(x)$ is called magnetic flux and represents the phase shift due to the magnetic field.
Since $\Phi_L'(x) = a(x)$ and $a(L)=a(-L)$,
the corresponding eigenvalues and eigenfunctions are easily determined:
$\mu_{L,j} = \left( \frac{j \pi + \Phi_L(L)}{L}\right)^2$ and
$\psi_{L,j}(x) = (2L)^{-\frac12} \exp\!\left( \I \Phi_L(x)-\I \frac{j\pi+\Phi_L(L)}{L} x\right)$, $j \in \Z$.
On the one hand, all eigenvalues except for 0, which is non-degenerate, are degenerate with multiplicity two
if $\Phi_L$ is an integral multiple of $\pi$ and, if $\Phi_L(L)$ is a half-integral multiple of $\pi$, the eigenvalue 0 is also
degenerate with multiplicity two.
On the other hand, in any other case all eigenvalues are non-degenerate.

\paragraph{Dirichlet boundary conditions}
For Dirichlet boundary conditions, Eqs.~\eqref{eq: free periodic bc} and \eqref{eq: perturbed periodic bc}
have to be replaced by 
\begin{align*}
 \varphi_L(-L) =0 = \varphi_L(L)\quad \text{and}\quad \psi_L(-L)=0 =\psi_L(L)\,,
\end{align*}
respectively.
Moreover, we only assume $a \in W^{1,1}(\R)$, but not that $a(L)=a(-L)$.
The free and the magnetically perturbed Hamiltonian have the same spectrum,
$\lambda_{L,j} = \mu_{L,j} = \left( \frac{\pi j}{2L} \right)^2$
with $j \in \N$.
This is easily shown using the gauge transformation $\e^{\I \Phi_L}$
where $\Phi_L$ is defined as above.
The corresponding normalised eigenfunctions are given by
\begin{align*}
\varphi_{L,j} (x) := \begin{cases} \frac{1}{\sqrt{L}} \sin\!\left(\frac{j \pi}{2L}x\right)\ &\text{for even}\ j\\
   \frac{1}{\sqrt{L}} \cos\!\left(\frac{j \pi}{2L}x\right)\ &\text{for odd}\ j\end{cases}\quad \text{and}\quad
\psi_{L,j} := \e^{\I \Phi_L} \varphi_{L,j}\,.
\end{align*} 
In contrast with the periodic case, all eigenvalues are non-degenerate and $0$ is not an eigenvalue.\\

\begin{rem}
One could regard $b := a'$ as the magnetic field
(whatever the physical interpretation of a magnetic field in one dimension is).
Due to the condition that $a$ is integrable, $b$ has to decay to zero at infinity
which, for instance, rules out the case of a constant magnetic field.
Moreover, the integral of $b$ has to vanish. In the periodic case the additional restrictions $b(-L) = -b(L)$ and $\int_{-L}^Lb(x)\,\di x = 0$ for any $L$ occur.
\end{rem}

We want to remind the reader that the gauge transformation maps
the Hamiltonian with magnetic field, Eq.~\eqref{eq: perturbed de}, to a free Hamiltonian
(with the same or slightly modified eigenvalues as the free problem \eqref{eq: free de}).
However, the eigenfunctions depend on this gauge.
Therefore the effect of the magnetic potential on the overlap of the ground states can be
expected to be more drastic than the one on the difference of the ground-state energies.

%%%%%%%%%%% N-fermion problem %%%%%%%%%%%%%%%%%%%%%%%%%%%%%%%%%%%
\subsection{$N$ non-interacting fermions and their ground-state energy}\label{sec: FSE}

\paragraph{Periodic boundary conditions}
Now we consider a system of $N$ non-interacting spinless fermions
on the interval $\left[-L,L\right] \subset \R$ with periodic boundary conditions.
The corresponding $N$-particle Fock space is the $N$-fold antisymmetric tensor product
of the one-particle Hilbert space, $\mathcal{F}_L^{(N)} := {\wedge}^N \hilbert_L$.
The energy of this system is represented by the Hamiltonian
$H_{a,N,L} := {\bigotimes}^N H_{a,L}$.
Note that for $a \equiv 0$ we recover the free Hamiltonian.
The ground-state energy $E_{a,N,L}$ as well as corresponding eigenfunctions of these
two $N$ fermion system can be calculated from the eigenvalues and -functions of the one-particle system
as derived in Subsection~\ref{sec: 1fermionenergy}.
The magnetic flux at position $L$ is decomposed as
\begin{align}\label{eq-2}
\Phi_L(L) = n_L\pi + \delta_L
\end{align}
with $n_L \in \Z$ and $\delta_L \in \left]-\frac{\pi}{2},\frac{\pi}{2}\right]$.
With this notation the ground-state energy of $H_{a,N,L}$ is
$E_{a,N,L} = \sum_{j\in\mathcal{N}_{n_L}} \left(\frac{\pi j - \Phi_L(L)}{L}\right)^2$\,.
The set $\mathcal{N}_{n_L}$ is $\left\{-m-n_L,...,m-n_L\right\} \subset \Z$
for odd $N$ and
$\left\{-m-n_L,\dots,m-n_L-1\right\} \subset \Z$ for even $N$.
Here and henceforth we set $m:=\frac{N}{2}$ for even $N$ and $m:=\frac{N-1}{2}$ for odd $N$.
Note that the ground-state energy is twofold degenerate in the cases
$\delta_L = \frac{\pi}{2}, N \in (2\N-1)$ and $\delta_L=0, N \in (2\N)$ and
non-degenerate otherwise. Then the difference $E_{a,N,L}-E_{0,N,L}$ of the ground-state energies is equal to
$\frac{\delta_L^2 N}{L^2}$ for odd $N$ and to $\frac{\delta_L (\delta_L-\pi) N}{L^2}$ for even $N$.

Therefore the sequence $\left\{E_{a,N,L}-E_{0,N,L}\right\}_{N,L}$ has two limit points and
the asymptotic behaviour differs for the subsequences with even and odd values of $N$, respectively.
We find
\begin{align*}
E_{a,N,L} - E_{0,N,L} =
\begin{cases} \frac{4 \delta^2 \rho^2}{N} + o\big(\frac{1}{N}\big) & \text{ for odd } N\,\\
\frac{4 \delta (\delta-\pi) \rho^2}{N} + o\big(\frac{1}{N}\big) & \text{ for even } N
\end{cases}
\end{align*}
as $N,L\to\infty$ with $\frac{N}{2L}\to \rho$ for a fixed limit density $\rho > 0$.
Here the parameter $\delta \in \left]-\frac{\pi}{2},\frac{\pi}{2}\right]$ is given by 
$n \pi + \delta = \frac12 \int_{-\infty}^{\infty} a(y) \, \di y$ where $n \in \Z$ is chosen appropriately.
Thus $\delta = \lim_{L \to \infty} \delta_L$.

Note that the leading term of the asymptotic expansion, which is of order 1 and called Fumi term, is zero.
However, the next term in this expansion which is of order $1/N$, the finite-size energy, is, in general,
non-zero for finite values of $N$.

\begin{rem}
We want to emphasize that the asymptotic terms may depend, in general, how we perform the thermodynamic limit,
cf. \cite{Geb-2014A,OS-2014WiP} for the case with an (electric) potential.
\end{rem}

Moreover, we find that the ground state of the Hamiltonians $H_{a,N,L}$ is the Slater determinant
$\Psi_{N,L} := \bigwedge_{k \in \mathcal{N}_{n_L}} \psi_{L,k}\,.$
Note that $\psi_{L,k-n_L}(x) = \e^{\I \left(\Phi_L(x) - \frac{\delta_L}{L} x\right)}\varphi_{L,k}(x)$.

\paragraph{Dirichlet boundary conditions}
Analogously to the periodic case we define the $N$-particle Hamiltonians
$H_{0,N,L}$ and $H_{a,N,L}$ with Dirichlet boundary conditions.
The corresponding ground-state energies are
$E_{0,N,L} = E_{a,N,L} = \sum\limits_{j=1}^N \left( \frac{j \pi}{2L}\right)^2$\,.
They are independent of $a$ and therefore the difference of the ground-state energies vanishes for all particle numbers $N$.
The ground state of the perturbed system is the Slater determinant
$\Psi_{N,L} = \bigwedge_{j=1}^N \psi_{L,j} = \bigwedge_{j=1}^N \left(\e^{\I \Phi_L} \varphi_{L,j}\right)$\,.

%%%%%%%%%%%%%% AOC %%%%%%%%%%%%%%%%%%%%%%%%%%%%%%%%%%%%%%%%%
\subsection{Anderson's orthogonality catastrophe}\label{sec: AOC}

We mainly study the asymptotic behaviour of the overlap of the ground states of the free and the perturbed system.
In the case of periodic boundary conditions we can determine the asymptotics of the overlap rather explicitly:
\begin{thm}[Anderson's orthogonality catastrophe, periodic boundary conditions]\label{thm-main}
Let $\rho > 0$ and $N \in \N$.
Assume $\int_{-\infty}^{\infty} |ya(y)|\, \di y < \infty$, $\max\left\{|n_L|\right\}<\infty$,
$|\delta_L| < \frac{\pi}{2}$, at least for a subsequence of $L$'s, with $|\delta|<\frac{\pi}{2}$.
Then there exists a constant $c > 0$ such that the overlap of the two $N$-fermion ground states satisfies
\begin{align}\label{eq: AOC}
\left|\det \!\left( \langle \varphi_{L,j}, \psi_{L,k} \rangle \right)_{j\in\mathcal{N}_0,k\in\mathcal{N}_{n_L}}
\right|^2 = N^{-\frac{2 \delta^2}{\pi^2}} (c + o(1)) 
\end{align}
in the thermodynamic limit $N,L \to \infty$ with $\frac{N}{2L} \to \rho$. In particular, the overlap vanishes in
the considered limit for $\delta \neq 0$. For $\delta = 0$ the overlap is equal to 1.
\end{thm}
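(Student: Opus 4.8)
The plan is to express the squared overlap as the modulus squared of a Toeplitz determinant with an $L$-dependent symbol and then to extract its $N$-asymptotics by separating a convergent factor from a genuine Fisher--Hartwig determinant. Concretely, using the gauge relation $\psi_{L,k-n_L}=\e^{\I(\Phi_L-\delta_L\,\cdot/L)}\varphi_{L,k}$ noted above together with the plane-wave form $\varphi_{L,j}(x)=(2L)^{-1/2}\e^{-\I\pi jx/L}$, one reindexes $k\mapsto k-n_L$ so that both indices run over $\mathcal N_0$ and substitutes $x=L\theta/\pi$, obtaining
\begin{align*}
\langle\varphi_{L,j},\psi_{L,k-n_L}\rangle=\frac{1}{2\pi}\int_{-\pi}^{\pi}\e^{\I(j-k)\theta}\,f_L(\theta)\,\di\theta,
\qquad f_L(\theta):=\exp\!\Big(\I\Phi_L\big(\tfrac{L}{\pi}\theta\big)-\I\tfrac{\delta_L}{\pi}\theta\Big),
\end{align*}
so the squared overlap equals $\big|\det T_N(f_L)\big|^2$, an $N\times N$ Toeplitz determinant (whose value is unchanged by the symmetric choice of index window, so one may equivalently take indices $0,\dots,N-1$). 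Here $|f_L|\equiv1$, so the associated Szeg{\H o} constant has modulus one. The key structural point is that $f_L$ is continuous on the circle for each finite $L$, but as $L\to\infty$ the function $\Phi_L(\tfrac{L}{\pi}\theta)$ concentrates essentially all of its variation into a $\theta$-window of width $O(1/L)$ around $\theta=0$, so that $f_L\to f$ pointwise with boundary values $f(0^{\pm})=(-1)^{n}\e^{\pm\I\delta}$: the limit has a pure jump of ratio $\e^{2\I\delta}$ at $z=1$ and is otherwise continuous. This genuine $N$-dependence of the symbol is exactly why no classical Szeg{\H o}/Fisher--Hartwig theorem can be applied directly.

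\emph{The decomposition.} Apply Lemma~\ref{lem-main} to factor $\det T_N(f_L)=B_N\cdot\det T_N(h_L)$, where $B_N$ converges to a nonzero constant $B_\infty$ and $h_L$ is the purely discontinuous, ``Dirac-$\delta$-flux'' symbol. This is the analytic heart of the proof: one writes $f_L$ as a fixed jump symbol (up to the parameter $\delta_L$) times a continuous factor recording the smooth approach to the jump, and then controls the resulting determinant $\det(\id+A)$ by a Borodin--Okounkov/Widom-type localisation. The moment hypothesis $\int_{-\infty}^{\infty}|y\,a(y)|\,\di y<\infty$ is what makes the correction operator $A$ trace-class uniformly in $L$, hence what makes $B_N$ converge; the conditions $\max|n_L|<\infty$ and $|\delta_L|,|\delta|<\tfrac{\pi}{2}$ keep the effective jump parameter $\delta_L/\pi$ inside $(-\tfrac12,\tfrac12)$, away from the Fisher--Hartwig resonances.

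\emph{Fisher--Hartwig asymptotics and conclusion.} Replace $h_L$ by its $L$-independent Fisher--Hartwig counterpart $h$, the pure jump symbol at $z=1$ with parameter $\delta/\pi$. Since $\delta_L-\delta=-\tfrac12\int_{|y|>L}a(y)\,\di y=O(1/L)=O(1/N)$ by the moment hypothesis, and the Fisher--Hartwig term contributes $N^{-(\delta_L/\pi)^2}=N^{-(\delta/\pi)^2}\e^{O(\log N/N)}$, this substitution perturbs the leading term only by a factor $1+o(1)$. Fisher and Hartwig's 1968 result --- which is a theorem, not merely the conjecture, for this pure-jump symbol --- yields $\det T_N(h)=\big(G(1+\tfrac{\delta}{\pi})G(1-\tfrac{\delta}{\pi})+o(1)\big)N^{-\delta^2/\pi^2}$ with $G$ the Barnes $G$-function, the prefactor being strictly positive for $|\delta|<\pi$. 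Assembling the three steps gives
\begin{align*}
\big|\det T_N(f_L)\big|^2=\big(c+o(1)\big)\,N^{-2\delta^2/\pi^2},\qquad
c=|B_\infty|^2\,\big(G(1+\tfrac{\delta}{\pi})\,G(1-\tfrac{\delta}{\pi})\big)^{2}>0,
\end{align*}
and for $\delta=0$ the Fisher--Hartwig symbol degenerates to the identity, so $\det T_N(h)\to1$, $B_N\to1$, and the overlap tends to $1$.

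\emph{Main obstacle.} The crux is the decomposition step: producing the factorisation with a genuinely \emph{convergent} bounded factor while the symbol itself varies with the matrix size. This is the content of Lemma~\ref{lem-main} and is what necessitates the generalised-Toeplitz framework. By contrast, the initial reduction, the even/odd-$N$ bookkeeping for the index sets $\mathcal N_{n_L}$, and the inessential $(-1)^{n}$ phases are routine.
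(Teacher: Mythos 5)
Your proof follows the same route as the paper: reduce the overlap to a Toeplitz determinant, factor via Lemma~\ref{lem-main} into a bounded/convergent factor times the determinant with the pure ``Dirac-$\delta$-flux'' symbol $\e^{\I\tilde g_L}$, argue via the moment condition that $\delta_L$ may be replaced by $\delta$ at the cost of a $1+o(1)$ factor, and invoke Fisher and Hartwig's 1968 exact Cauchy-determinant computation for the pure-jump symbol to extract $N^{-\delta^2/\pi^2}$ with a Barnes $G$-function prefactor. Two small cautions. First, the parenthetical by which you envision Lemma~\ref{lem-main} being proved (a multiplicative symbol split plus a Borodin--Okounkov/Widom localisation) is not what the lemma's proof actually does; it uses the additive identity $T_N(\e^{\I g_L})=T_N(\e^{\I\tilde g_L})\bigl[\id+T_N(\e^{\I\tilde g_L})^{-1}\Delta_N\bigr]$ with a trace-norm bound on $\Delta_N$ and a Lax--Milgram inverse bound, not symbol factorisation --- this does not affect your argument since you only use the lemma's conclusion, but the attribution is inaccurate. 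Second, you assert that the factor $B_N$ \emph{converges} to some $B_\infty$, whereas Lemma~\ref{lem-main} as stated only yields two-sided bounds $0<\delta_1\le C_{N,L}\le\delta_2<\infty$; the theorem's claim of a limiting constant $c$ is taken for granted in the paper's own proof as well, so this is a shared elision rather than an error unique to your write-up.
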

The proof of this theorem is the main task of this paper and is carried out in Section~\ref{sec: Toeplitz}.
For our proof it is crucial that the matrix on the left-hand side of Eq.~\eqref{eq: AOC} is a Toeplitz matrix.
The leading term of the asymptotics results from a Szeg{\H o}--type limit theorem for a Toeplitz determinant
with a discontinuous symbol as given by Fisher and Hartwig \cite{FH-1968}.
\begin{rem}
Levitov, Lee and Lesovik derived a similar result to Eq.~\eqref{eq: AOC} in a time-dependent setup \cite{LLL-1996, LL-1993}. They consider a loop, i.\,e. an interval of length $L$ with periodic boundary conditions, in a fixed external potential. By a short pulse a magnetic flux is induced which leads to a time-dependent vector potential in the Hamiltonian. The overlap of the ground states of the system long before the pulse and long afterwards are compared where the thermodynamic limit $L \to \infty$ is taken at fixed Fermi energy.
\end{rem}

For reasons that will become clear soon, we, moreover, show a certain upper bound to the overlap
of the ground states for both boundary conditions. The bound is obtained with the inequality $\ln\!\left(\det A\right) \leq - \tr(\mathds{1}-A)$ and calculating the leading term of the so-called Anderson integral. For Dirichlet boundary conditions the direct calculation of the Anderson integral is omitted in favour of another approach using a property of a Hilbert matrix appearing in the asymptotic analysis of the determinant.

\begin{thm}[Upper bound for Anderson's orthogonality catastrophe, periodic and Dirichlet boundary conditions]\label{thm: upper bound}
Let $\rho > 0$ and $N \in \N$. With the same assumptions as in Theorem~\ref{thm-main},
an upper bound for the asymptotics of the overlap is given by
\begin{align}\label{eq-15}
\left|\det \!\left( \langle \varphi_{L,j}, \psi_{L,k} \rangle \right)_{j\in\mathcal{N}_0,k\in\mathcal{N}_{n_L}} \right|^2
\leq N^{-\frac{2}{\pi^2} \sin^2(\delta)} (\widetilde{c} + o(1)) 
\end{align}
with a constant $\widetilde{c} > 0$ as $N,L \to \infty$ with $\frac{N}{2L} \to \rho$.
\end{thm}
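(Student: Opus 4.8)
The plan is to bound the squared overlap by $\exp\bigl(-\tr(\id - A)\bigr)$ where $A = A^*A$ with $A = \bigl(\langle\varphi_{L,j},\psi_{L,k}\rangle\bigr)$, exploiting the elementary inequality $\ln\det A \le -\tr(\id-A)$ valid for $0\le A\le\id$ (which holds here since $A$ is a Gram-type matrix of norm at most one). The exponent $\tr(\id - A) = \sum_{k\in\mathcal N_{n_L}} \bigl(1 - \sum_{j\in\mathcal N_0}|\langle\varphi_{L,j},\psi_{L,k}\rangle|^2\bigr)$ is the \emph{Anderson integral}; since the $\varphi_{L,j}$, $j\in\mathcal N_0$, together with their orthogonal complement form a complete system, this equals $\sum_{k\in\mathcal N_{n_L}} \sum_{j\notin\mathcal N_0}|\langle\varphi_{L,j},\psi_{L,k}\rangle|^2$, a manifestly nonnegative double sum over ``low'' occupied modes $k$ and ``high'' unoccupied modes $j$. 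The whole task reduces to showing this quantity is $\tfrac{2}{\pi^2}\sin^2(\delta)\ln N + O(1)$ as $N,L\to\infty$ with $N/2L\to\rho$.

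First I would compute the matrix entries explicitly. Using $\psi_{L,k-n_L}(x) = \e^{\I(\Phi_L(x) - \frac{\delta_L}{L}x)}\varphi_{L,k}(x)$ from Subsection~\ref{sec: FSE}, and inserting the plane-wave form of $\varphi_{L,j}$, each inner product $\langle\varphi_{L,j},\psi_{L,k}\rangle$ becomes a Fourier coefficient of the $2L$-periodic function $g_L(x) := \e^{\I(\Phi_L(x) - \frac{\delta_L}{L}x)}$ evaluated at frequency $\frac{\pi}{L}(k-n_L-j)$. Hence the Anderson integral is
\begin{align*}
\sum_{k\in\mathcal N_{n_L}}\ \sum_{j\notin\mathcal N_0} \bigl|\widehat{g_L}\bigl(\tfrac{\pi}{L}(k-n_L-j)\bigr)\bigr|^2,
\end{align*}
a sum of squared Fourier coefficients of $g_L$ over a shifted lattice region. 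The next step is the key structural observation, parallel to Lemma~\ref{lem-main}: write $g_L = h_L\cdot u_L$ where $h_L := \e^{\I\Phi_L(x) - \I\frac{\Phi_L(L)}{L}x}$ is \emph{continuous and periodic} (its exponent vanishes at $\pm L$ by construction), hence has rapidly decaying Fourier coefficients under the moment hypothesis $\int|ya(y)|\,\di y<\infty$, while $u_L(x) := \e^{\I\frac{n_L\pi}{L}x}$ contributes only an integer index shift. Thus $g_L$ differs from a \emph{genuinely $L$-independent} discontinuous symbol — essentially $\e^{\I\delta\,\sign(x)}$-type, the jump of size $2\delta$ at the boundary point of the circle — by a factor whose Fourier coefficients are summable uniformly in $L$. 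A convolution/Cauchy--Schwarz estimate then transfers the asymptotics of the Anderson integral of the clean discontinuous symbol to $g_L$ with only an $O(1)$ error.

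It remains to evaluate the Anderson integral for the model symbol. Its Fourier coefficients behave like $\widehat{(\e^{\I\delta\,\sign})}(\ell) \sim \frac{\sin\delta}{\pi}\cdot\frac{(-1)^\ell}{\I\ell}$ for large $|\ell|$, so $|\widehat g(\ell)|^2 \sim \frac{\sin^2\delta}{\pi^2 \ell^2}$. Summing $\frac{\sin^2\delta}{\pi^2}\sum \frac{1}{\ell^2}$ over the region $\{k\in\mathcal N_{n_L}, j\notin\mathcal N_0\}$ in the variable $\ell = k-n_L-j$: for fixed occupied $k$ (ranging over roughly $-m,\dots,m$ after the shift) the unoccupied $j$ give $|\ell|$ running over all integers with $|\ell|\gtrsim m - |k\text{-offset}|$, producing a harmonic-type sum $\sum_{|\ell|\ge d}\ell^{-2} \asymp 1/d$; summing $1/d$ as the ``distance to the Fermi edge'' $d$ itself ranges over $1,\dots,O(m)$ yields $\ln m + O(1) = \ln N + O(1)$. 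Collecting the prefactor $\tfrac{\sin^2\delta}{\pi^2}$ and the factor $2$ from the two Fermi edges (the occupied band $\mathcal N_{n_L}$ has \emph{two} boundaries, contributing symmetrically) gives exactly $\tfrac{2\sin^2\delta}{\pi^2}\ln N + O(1)$, hence $\bigl|\det A\bigr|^2 \le \e^{-\tr(\id-A)} \le (\widetilde c + o(1)) N^{-2\sin^2(\delta)/\pi^2}$.

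For Dirichlet boundary conditions the same scheme applies: the ground state $\Psi_{N,L} = \bigwedge_{j=1}^N\e^{\I\Phi_L}\varphi_{L,j}$ gives an Anderson integral with the sine/cosine eigenfunctions, and after the analogous continuous/discontinuous splitting the matrix that governs the subleading behaviour is a Toeplitz-plus-Hilbert matrix (the Hilbert-matrix piece arising from the $\sin/\cos$ double-index structure rather than pure exponentials). The point at which the proof differs — and the main obstacle — is precisely controlling this Hilbert matrix: its norm is $\pi$ but its trace against the relevant projections must be shown to be $O(1)$, so that it does not spoil the $\ln N$ coefficient; I would invoke the known boundedness and the explicit diagonalisation of the finite Hilbert matrix to argue that the Dirichlet Anderson integral has the same leading $\tfrac{2\sin^2\delta}{\pi^2}\ln N$ as in the periodic case. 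The genuinely hard step throughout is making the ``transfer from $L$-dependent to $L$-independent symbol'' uniform — i.e.\ showing the error term in replacing $g_L$ by the clean discontinuous symbol is $O(1)$ and not, say, $o(\ln N)$ only — which is where the moment assumption $\int|ya(y)|\,\di y<\infty$ and the finiteness of $\max|n_L|$ are essential.
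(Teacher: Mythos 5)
Your overall architecture for the periodic case --- bound $|\det A|^2 \le \exp\bigl(-\tr(\id-A^*A)\bigr)$, identify the exponent as the Anderson integral, and show it equals $\tfrac{2}{\pi^2}\sin^2(\delta)\ln N + O(1)$ --- is the same as the paper's, and your final double sum correctly produces the coefficient $\tfrac{2}{\pi^2}\sin^2\delta$. The genuine gap is in the step that makes the $L$-dependence of the symbol disappear. You factorise $\e^{\I g_L} = h_L\,u_L$ with $h_L(x) = \e^{\I[\Phi_L(x) - \Phi_L(L)x/L]}$ and $u_L(x) = \e^{\I n_L\pi x/L}$, and assert this exhibits a discontinuous (Fisher--Hartwig-type) factor with a jump \emph{at the boundary of the circle} up to a factor with uniformly summable Fourier coefficients. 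Neither claim is correct: $h_L$ and $u_L$ are both continuous on the circle (each equals $1$, respectively $(-1)^{n_L}$, at both endpoints), and indeed $\e^{\I g_L}$ itself has no jump --- $g_L(\pm L) = \pm n_L\pi$, so $\e^{\I g_L(\pm L)}=(-1)^{n_L}$ from both sides. The discontinuity in the paper is not latent in $g_L$; it is introduced on purpose by replacing $\Phi_L(x)$ with $\Phi_L(L)\sign(x)$, which concentrates the flux at $x=0$ (a jump of height $2\Phi_L(L)$ there, not at the boundary) and yields the surrogate symbol $\e^{\I\tilde g_L}$ whose matrix entries $\tfrac{\sin\delta_L}{\pi(j-k)+\delta_L}$ are computable in closed form. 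Moreover, under the hypothesis $\int|y\,a(y)|\,\di y<\infty$ your continuous factor $h_L$ is only $W^{1,1}$, whose Fourier coefficients decay like $O(1/\ell)$ --- not absolutely summable --- so the ``convolution/Cauchy--Schwarz transfer with $O(1)$ error'' is unsupported. You flag this as ``the genuinely hard step'' but do not supply an argument, and as written the factorisation you chose cannot supply one.

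The paper performs the transfer at the \emph{determinant} level rather than at the Anderson-integral level: Lemma~\ref{lem-main} writes $T_N(\e^{\I g_L}) = T_N(\e^{\I\tilde g_L})\bigl[\id + T_N(\e^{\I\tilde g_L})^{-1}\bigl(T_N(\e^{\I g_L})-T_N(\e^{\I\tilde g_L})\bigr)\bigr]$, controls the inverse via $\re\bigl((-1)^{n_L}\e^{\I\tilde g_L}\bigr)\ge\cos\delta_L$, and bounds the trace norm of the difference by $\tfrac{N}{L}\int|y\,a(y)|\,\di y$ via a decomposition into four semidefinite Toeplitz pieces with symbols built from $\Phi_L^\pm$; the $\exp(-\tr)$ inequality is then applied only to $\widetilde D_{N,L}$, whose Anderson integral is explicit. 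If you insist on your route, you would need to show directly that $\mathcal I_{N,L}^{g_L}-\mathcal I_{N,L}^{\tilde g_L}=O(1)$, which is not a corollary of the trace-norm bound. For Dirichlet conditions the paper also proceeds differently from what you propose: rather than evaluating the Anderson integral with sine eigenfunctions, it reduces the squared overlap to $\bigl|\det\bigl(\id - \tfrac{4}{\pi^2}\sin^2(\delta_L)K_M\bigr)\bigr|$, splits $K_M = K_M^{--}+K_M^{+-}+K_M^{-+}+K_M^{++}$, shows $\|K_M^{--}\|_1 = \tfrac14\ln N + O(1)$ with the remaining pieces of trace norm $O(\sqrt{\ln N})$, and invokes $\|H_{-1/2}\|_{\mathrm{op}}=\pi$ to get $\|K_M^{--}\|_{\mathrm{op}}\le\pi^2/4$, which ensures $\id-\tfrac{4}{\pi^2}\sin^2(\delta_L)K_M^{--}$ is invertible and can be factored off. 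So the Hilbert-matrix norm is used for invertibility, not to control an error term as you suggest.
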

The proof of this theorem is given in Subsection~\ref{sec: proof 2}.
\begin{rem}
Up to the constant $\frac{2}{\pi^2}$, the negative exponent $\delta^2$ of the exact asymptotics in Eq.~\eqref{eq: AOC}
and the negative exponent $\sin^2(\delta)$ of the upper bound in Ineq.~\eqref{eq-15} satisfy ``Anderson's rule''. 
So they are essentially the sine respectively the arc sine of each other.
 Note that the negative exponent $\delta^2$ is sharp and cannot be improved.
\end{rem}
\allowdisplaybreaks
\begin{rem}\label{rem: Dirichlet}
When considering the overlap for Dirichlet boundary conditions, one is, in leading order, led to the asymptotic analysis of
\begin{align*}
\det\!\left( \mathds{1} - \frac{\sin^2(\delta_L)}{\pi^2} P_N H_{\eta}^2 P_N \right)\ \text{as}\ N,L \to \infty\,.
\end{align*}
Here $H_{\eta} := \left( \frac{1}{j+k+\eta} \right)_{j,k \in \N}$ is a Hilbert matrix with $\eta \in \R$, $-\eta \notin \N$, and $P_N$ a projection on an $N$-dimensional subspace.
In order to determine this asymptotics, we would need a Szeg{\H o} limit theorem for the determinant of the sum of the Toeplitz matrix $\mathds{1}$ and the square of the Hilbert matrix $H_{\eta}$.
We are only aware of asymptotic results for the sum of a Toeplitz and a Hankel matrix requiring a certain relation between the two symbols which, however, do not apply here.
It seems reasonable that the overlap is independent of boundary conditions and in particular the same as for periodic boundary conditions. 
\end{rem}

\begin{rem} If we would allow a magnetic field $b$ with non-vanishing (finite) integral then from the start we can only allow Dirichlet boundary conditions. The logarithm of the overlap is now to leading order of the form $-N \left|\int b\right|\!/(2\pi)$. It seems that the next-to-leading term is of the order $\ln(N)$ and that it is of the same form (possibly up to a constant) as above if we replace $b$ by $b-\int b$ in the above asymptotics for vanishing total magnetic field $b$. 
\end{rem}
%%%%%%%%%%%%%%%%%%%%%%%%%%%%%%%%%%%%%%%%%%%%%%%%%%%%%%%%%%%%%
\section{Asymptotic analysis of Toeplitz determinants}\label{sec: Toeplitz}
%%%%%%%%%%%%%%%%%%%%%%%%%%%%%%%%%%%%%%%%%%%%%%%%%%%%%%%%%%%%%

Let $f:[-L,L]\to\C$ be a complex-valued (essentially bounded) function.
The operator of multiplication by $f$ is denoted by $f$ as well.
A generalised Toeplitz matrix with symbol $f$ is defined as the $N\times N$-matrix
\begin{align*}
  T_N(f) := \left( \langle\varphi_j,f\varphi_k\rangle \right)_{j,k=1,\ldots, N}
\end{align*}
where $\left\{\varphi_k\right\}_{k}$ is an orthonormal basis of the underlying Hilbert space.
The corresponding determinant $\det(T_N(f))$ is referred to as the (generalised) Toeplitz determinant.
Here the generalisation is with respect to the choice of the basis.
In our context we choose the eigenbasis of the corresponding free Hamiltonian.
In the case of periodic boundary conditions this basis consists of plane waves
and the matrix elements are the Fourier coefficients of $f$.
In particular, they only depend on the difference $j-k$ of the indices and we recover a Toeplitz matrix in the usual sense.
For Dirichlet boundary conditions, the emerging matrix is a sum of a usual Toeplitz and a Hankel matrix.
For a deeper study and an overview of results for Toeplitz and related matrices see, for instance, \cite{BS-1990,DIK-2013}.

Hereinafter, we study symbols of the form $f = \e^{\I g_L}$.
In the case of periodic boundary conditions $g_L$ is given by
\begin{align}
g_L(x):= \Phi_L(x) - \frac{\delta_L}{L}x\quad \text{for}\quad x \in \left[-L,L\right]
\end{align}
with $\Phi_L$ and $\delta_L$ as defined in Eqs.~\eqref{eq-magneticflux} and \eqref{eq-2}, respectively.
The function $g_L$ has a jump of height $2 n_L \pi$ at the boundary $x=\pm L$.
This becomes particularly simple if we take, at least informally, the discontinuous magnetic potential
$\tilde{a}(x):=2 \Phi_L(L) \delta(x)$.
More precisely, we set
\begin{align}\label{eq-4}
  \tilde{g}_L(x) := \Phi_L(L) \sign(x) - \frac{\delta_L}{L}x\quad \text{for}\quad x \in \left[-L,L\right]
\end{align}
which is reminiscent of an integral of $\tilde{a}$.
As $g_L$, $\tilde{g}_L$ has a jump discontinuity of height $2n_L\pi$
at the boundary $x=\pm L$, but also at $x=0$ with height $2\Phi_L(L)$.\\

For Dirichlet boundary conditions we do not have the term $- \frac{\delta_L}{L}x$\,.
Hence we set
\begin{align}
g_L(x):=\Phi_L(x)\quad \text{and}\quad \tilde{g}_L(x) := \Phi_L(L) \sign(x)\quad \text{for}\quad x \in \left[-L,L\right]\,.
\end{align}
Here the jump discontinuities at 0 and at the boundary are of the same height $2 \Phi_L(L)$\,.\\

Now we have
$\left| \det\!\left(\langle \varphi_{L,j}, \psi_{L,k} \rangle \right)_{j,k=1,\dots,N} \right|^2 = \left| \det\!\left(T_N\big(\e^{\I g_L}\big)\right) \right|^2$\,.
Note that the symbol of this Toeplitz determinant depends on the parameter $L$, which also tends to infinity, unlike in
classical Szeg{\H{o}} limit theorems.
To circumvent this problem $D_{N,L} := \det\!\left(T_N\big(\e^{\I g_L}\big)\right)$ is factorised
into the Toeplitz determinant $\widetilde{D}_{N,L} := \det\big(T_N\big(\e^{\I \tilde{g}_L}\big)\big)$
with the symbol $\e^{\I \tilde{g}_L}$ and a bounded term as the main step in the proof of the theorems.
This Toeplitz determinant $\widetilde{D}_{N,L}$ in turn can be rewritten as a usual Toeplitz determinant
with a Fisher--Hartwig symbol times a remainder term of order 1.
More precisely, we use the following lemma.

\begin{lem}\label{lem-main}
For the system with periodic or Dirichlet boundary conditions assume $\int_{-\infty}^{\infty} |y\,a(y)|\,\di y < \infty$, $\max\{|n_L|\} < \infty$,
and $|\delta_L| \leq c$ for some $0 \leq c < \frac{\pi}{2}$.
Then we have
\begin{align*}
\left|  \det\!\left(T_N\big(\e^{\I g_L}\big)\right) \right|^2 = {C}_{N,L} \cdot \left| \det\!\left(T_N\big(\e^{\I \tilde{g}_L}\big)\right) \right|^2
\end{align*}
for any $N$ and $L$. In general, $C_{N,L}$ depends on $N$ and $L$.
Moreover, there are positive constants $\rho_0,\ \delta_1$, and $\delta_2$ such that
\begin{align*}
0 < \delta_1 \leq C_{N,L} \leq \delta_2 < \infty \quad \text{for all}\ N\ \text{and}\ L\ \text{with}\quad \frac{N}{2L} \leq \rho_0\,. 
\end{align*}
\end{lem}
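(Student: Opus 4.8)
I would start from the factorisation $\e^{\I g_L}=\e^{\I\tilde g_L}\cdot\e^{\I h_L}$ with $h_L:=g_L-\tilde g_L$, so that (once $\det T_N(\e^{\I\tilde g_L})\neq 0$ is secured) $C_{N,L}$ is simply $\big|\det T_N(\e^{\I g_L})/\det T_N(\e^{\I\tilde g_L})\big|^2$, and the whole lemma is the assertion that this ratio stays bounded above and away from zero, uniformly for $\tfrac{N}{2L}\le\rho_0$. The first task is to collect the properties of the correction $h_L$. In both the periodic and the Dirichlet case one computes $h_L(x)=\int_{-L}^{x}a$ for $x<0$ and $h_L(x)=-\int_{x}^{L}a$ for $x>0$; hence $\|h_L\|_\infty\le\int_{\R}|a(y)|\,\di y$, the function vanishes at $x=\pm L$, it is of uniformly bounded total variation, and — this is where the first-moment hypothesis enters — $\|h_L\|_{\mL^1([-L,L])}\le\int_{\R}|y\,a(y)|\,\di y$ uniformly in $L$, with $h_L$ uniformly small outside a bounded neighbourhood of the origin. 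Moreover $h_L$ has a single jump, at $x=0$, of height $-2\Phi_L(L)=-2n_L\pi-2\delta_L$, so $\e^{\I h_L}$ is continuous on the circle except for the jump $\e^{-2\I\delta_L}$ at the origin, an angle that stays in $(-\pi,\pi)$ because $|\delta_L|\le c<\tfrac\pi2$; this is the same point at which $\e^{\I\tilde g_L}$ is discontinuous, and the two jumps $\e^{2\I\delta_L}$ and $\e^{-2\I\delta_L}$ are mutually inverse, which is exactly why $\e^{\I g_L}$ is continuous.

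Next I would make the ratio explicit. Writing $P_N$ for the orthogonal projection onto the occupied one-particle modes and inserting $\id=P_N+(\id-P_N)$ one gets the Widom-type identity
\begin{align*}
T_N\big(\e^{\I g_L}\big)=T_N\big(\e^{\I\tilde g_L}\big)\,T_N\big(\e^{\I h_L}\big)+G_{N,L}G'_{N,L},
\end{align*}
where $G_{N,L}:=P_N\,\e^{\I\tilde g_L}\,(\id-P_N)$ and $G'_{N,L}:=(\id-P_N)\,\e^{\I h_L}\,P_N$ are (generalised) Hankel operators; for Dirichlet boundary conditions the same identity holds with the extra Hankel structure of $T_N$ folded into $G_{N,L},G'_{N,L}$. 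Since $\e^{\I\tilde g_L}$ is a Fisher--Hartwig symbol with a single jump whose argument lies in $(-\pi,\pi)$, the stability/non-vanishing theory for such determinants gives $\det T_N(\e^{\I\tilde g_L})\neq 0$ with uniformly bounded inverse, so factoring it out yields
\begin{align*}
C_{N,L}^{1/2}=\pm\det\!\Big(T_N\big(\e^{\I h_L}\big)+T_N\big(\e^{\I\tilde g_L}\big)^{-1}G_{N,L}G'_{N,L}\Big).
\end{align*}
The ``bulk'' piece is $T_N(\e^{\I h_L})=\id+T_N(\e^{\I h_L}-1)$, and the point is that its would-be leading trace, $N\,\widehat{(\e^{\I h_L}-1)}(0)=\tfrac{N}{2L}\int_{-L}^{L}(\e^{\I h_L(x)}-1)\,\di x$, is $O(1)$ precisely by the first-moment bound on $h_L$; this is what keeps $C_{N,L}$ from growing with $N$.

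The remaining — and genuinely delicate — step is the uniform control of the Hankel cross-term. Since $\e^{\I h_L}$ and $\e^{\I\tilde g_L}$ are only piecewise continuous, the operators $G_{N,L}$, $G'_{N,L}$, and the Hankel operators of either symbol are bounded but not compact (Hilbert-matrix type), so no individual term above is trace class with a bounded norm. What makes the argument go through is that both discontinuities sit at the \emph{same} point $x=0$ and are mutually inverse, so that the singular (Hilbert-type) part of $T_N(\e^{\I\tilde g_L})^{-1}G_{N,L}G'_{N,L}$ cancels the singular part of $T_N(\e^{\I h_L}-1)$, leaving a trace-class operator whose trace norm is bounded uniformly in $N,L$ — the uniform smallness of $h_L$ away from the origin (again the first moment) being used to dispose of the remaining smooth part. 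This puts $C_{N,L}=|\det(\id+K_{N,L})|^2$ with $\|K_{N,L}\|_{\mathfrak{S}_1}$ bounded for $\tfrac{N}{2L}\le\rho_0$, giving the upper bound $\delta_2$. For the lower bound $\delta_1>0$ one additionally needs $\id+K_{N,L}$ to be uniformly invertible; here the hypotheses $\max\{|n_L|\}<\infty$ and $|\delta_L|\le c<\tfrac\pi2$ are decisive, since they keep every jump factor occurring along the way bounded away from $-1$, hence all the relevant Toeplitz (and Toeplitz-plus-Hankel, in the Dirichlet case) operators uniformly invertible, so $\det(\id+K_{N,L})$ is bounded away from $0$. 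The principal obstacle, then, is this uniform-in-$L$ (not merely asymptotic-in-$N$) analysis of Hankel/Hilbert-type operators attached to discontinuous, $L$-dependent symbols, together with securing the two-sided invertibility that the lower bound on $C_{N,L}$ requires.
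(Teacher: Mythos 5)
Your high-level structure is right — factor out $T_N(\e^{\I\tilde g_L})$ and show the remaining determinantal factor is bounded above and below — but the path you choose through a Widom-type multiplicative identity is genuinely different from the paper's, and in your hands it has gaps that the paper's simpler additive route avoids.

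The paper does not touch the factorisation $\e^{\I g_L}=\e^{\I\tilde g_L}\e^{\I h_L}$ at the Toeplitz-matrix level. It writes
\begin{align*}
T_N(\e^{\I g_L})=T_N(\e^{\I\tilde g_L})\Bigl[\id+T_N(\e^{\I\tilde g_L})^{-1}\bigl(T_N(\e^{\I g_L})-T_N(\e^{\I\tilde g_L})\bigr)\Bigr],
\end{align*}
i.e. compares the two Toeplitz matrices \emph{additively}, $\Delta_N:=T_N(\e^{\I g_L}-\e^{\I\tilde g_L})$, and bounds its trace norm directly. No Widom identity and no Hankel operators enter anywhere. The crucial tool that you are missing is a trace-norm bound for a generalised Toeplitz matrix with a real $\mL^1$-symbol: after splitting $\e^{\I g_L}-\e^{\I\tilde g_L}$ into real and imaginary parts and each of those into positive and negative parts, every piece $T_N(e^\pm_{\gtrless})$, $T_N(f^\pm_{\gtrless})$ is semidefinite, so its trace norm \emph{equals} its trace, which is $N$ times the zeroth Fourier coefficient, hence $\leq\tfrac{N}{2L}\|\cdot\|_{\mL^1}$. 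Combining this with the pointwise factorisations into products of two sine terms — exactly one of which is dominated by $\tfrac12\Phi_L^\pm$ and hence contributes $\int|ya(y)|\,\di y$ after Fubini — gives $\|\Delta_N\|_1\leq\tfrac{N}{L}\int|ya(y)|\,\di y$. Together with $\|T_N(\e^{\I\tilde g_L})^{-1}\|\leq1/|\cos\delta_L|$ this makes $\|T_N(\e^{\I\tilde g_L})^{-1}\Delta_N\|_1$ as small as one likes once $\tfrac{N}{2L}\leq\rho_0$ is small; \emph{that} smallness is what delivers both $\delta_1>0$ and $\delta_2<\infty$, by elementary determinant estimates. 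The lemma's $\rho_0$ is precisely the device making this quantitative.

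In your version there are two concrete gaps. First, you replace the needed trace-\emph{norm} bound for $T_N(\e^{\I h_L}-1)$ by a trace bound: you observe that $N\,\widehat{(\e^{\I h_L}-1)}(0)=\tfrac{N}{2L}\int(\e^{\I h_L}-1)=O(1)$, but a matrix with $O(1)$ trace can perfectly well have $\ln N$ trace norm (balanced positive and negative eigenvalues), and you never upgrade one to the other. The paper's sign-split trick is exactly what closes this and would give you $\|T_N(\e^{\I h_L}-1)\|_1=O(\tfrac{N}{L})$, but you should say so. Second, your cancellation mechanism is the wrong picture: because $\e^{\I h_L}-1$ is $\mL^1$-small, the ``bulk'' $T_N(\e^{\I h_L}-1)$ already has $O(1)$ trace norm — there is no logarithmically divergent part of it for the Hankel cross-term to cancel. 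If you insist on the Widom route, what you actually need is a direct, uniform-in-$(N,L)$ trace-norm bound on $T_N(\e^{\I\tilde g_L})^{-1}G_{N,L}G'_{N,L}$ alone, which your sketch neither proves nor sets up to prove; the colocated inverse jumps are the right heuristic, but the argument as written does not discharge it. Relatedly, your lower bound via ``jump factors away from $-1$ hence uniform invertibility of $\id+K_{N,L}$'' is not quantitative; the paper avoids this entirely by choosing $\rho_0$ so small that $\|K_{N,L}\|_1<1$, after which both bounds are automatic.

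Your correct observations — identification of $h_L=g_L-\tilde g_L$ and its $\mL^1$ bound via $\int|ya|$, the role of $|\delta_L|<\tfrac\pi2$ in keeping $T_N(\e^{\I\tilde g_L})$ uniformly invertible, and the significance of the coinciding jump at $x=0$ — are all on target and are indeed the substantive content of the lemma; what is missing is the elementary sign-splitting that lets you convert $\mL^1$-smallness of the symbol difference into smallness of the trace norm, without any Hankel machinery at all.
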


The proof of this lemma is postponed to Section~\ref{sec: decomposition}. It is based on a factorisation of the determinant
and a decomposition of some of the resulting matrices.

\begin{rem}
For simplicity we henceforth assume that the quotient $\frac{N}{2L}$ is kept at a fixed finite value $\rho > 0$ when we take the
thermodynamic limit.
However, this condition may be relaxed to $\lim_{N,L \to \infty} \frac{N}{2L}=\rho$ and the following statements still hold true.
\end{rem}

%%%%%%%%%%%%%%%%%%%%%%%%%%%%%%%%%%%%%%%%%%%%%%%%%%%%%%%%%%%%%
\subsection{Proof of Theorem~\ref{thm-main}}\label{sec: periodic proof}

In the following we show Theorem~\ref{thm-main} using the decomposition of the determinant as given in Lemma~\ref{lem-main}.

\begin{proof}[Proof of Theorem~\ref{thm-main}]
According to Lemma~\ref{lem-main} the leading asymptotics results from the Toeplitz determinant with the discontinuous symbol
$\e^{\I \tilde{g}_L}$ with $\tilde{g}_L$ given by Eq.~\eqref{eq-4}.
Thus the leading term arises from
\begin{align}
\left| \widetilde{D}_{N,L} \right|^2 &= \left| \det\!\left(\langle \varphi_{L,j}, \e^{\I \tilde{g}_L} \varphi_{L,k} \rangle \right)_{j\in\mathcal{N}_0,k\in\mathcal{N}_{n_L}} \right|^2\notag\\
&= \left| \det\!\left(\frac{\sin(\delta_L)}{\delta_L - \pi (j-k)}\right)_{j,k=-m+\chi_{2\N}(N),\dots,m} \right|^2\,\label{eq-ToeplitzEintraege}
\end{align}
where $m = \lfloor \frac{N}{2}\rfloor$\,.
This can be verified by a straightforward computation using Eq.~\eqref{eq-2}.
With the abbreviation $s_m:= \frac{\sin(\delta_L)}{\delta_L-\pi m}$ we obtain the $N\times N$-matrix
\begin{align*}
  T_N(\e^{\I \tilde{g}_L}) =
\begin{pmatrix}
 s_{0}  & s_{-1} & s_{-2} & \cdots & s_{-N+2} & s_{-N+1}\\
 s_{1}  & s_{0}  & s_{-1} & \cdots & s_{-N+3} & s_{-N+2}\\
\vdots  &           &             &            &                 & \vdots    \\
s_{N-1}& s_{N-2}& s_{N-3} & \cdots & s_{1}  & s_0
\end{pmatrix}\,.
\end{align*}
This is a Toeplitz matrix with the symbol $\e^{\I \tilde{g}_L}$ which is discontinuous at zero and at the boundary, but has no singularities.

If $a$ is compactly supported, $\delta_L$ is constant for sufficiently large values of $L$ and the constant value is $\delta$.
If we relax the condition that $a$ has compact support, the $L$-dependence of the symbol has to be taken into account.
However, with $\|Xa\|_1 < \infty$ and $\|a\|_1 < \infty$, $a$ has to decay faster than $\frac{1}{x^2}$ at infinity.
Therefore $\left| \delta_L - \delta\right| = o(\frac{1}{L})$ as $L \to \infty$.
Thus the first error term decays faster than $L^{-1}$.
Consequently, we may assume the magnetic potential to be compactly supported from now on.
In this special case there is an $\widetilde{L} < \infty$ such that $\delta_L$
is constant (i.e. does not depend on $L$) and is equal to $\delta$ for every $L > \widetilde{L}$. 

By the above arguments, the parameter $\delta_L$ can be considered to have the constant value $\delta=\lim_{L \to \infty} \delta_L$
for sufficiently large $L$.
The Toeplitz determinant we obtain this way has already been considered by Fisher and Hartwig in 1968 \cite{FH-1968}
as an example where the conjecture they brought forward in the same paper, now called Fisher--Hartwig conjecture, holds.
They explicitly calculate the leading asymptotic term in the limit $N \to \infty$ using Cauchy's determinant formula.
So we find
\begin{align*}
\left|\det\!\left( T_N(\e^{\I \tilde{g}_{L}})\right)\right|^2 \sim E N^{-2\frac{\delta^2}{\pi^2}} (1+{o}(1))
\end{align*} 
as $N,L \to \infty$ with $\frac{N}{L} \to \rho$.
Note that the constant $E$ depends on $\delta$ and is given explicitly in \cite{FH-1968}.
Related results on the Fisher--Hartwig conjecture can be found in \cite{Bas-1979, Wid-1973, BS-1985}.
Merging the above results and Lemma~\ref{lem-main} we obtain the assertion of Theorem~\ref{thm-main}.
\end{proof}

\begin{rem}
In his study on Anderson's orthogonality catastrophe \cite{Llo-1971}, P.~Lloyd constructs a model unitary transformation
between the eigenfunctions of the free Hamiltonian $H_0$ and the perturbed Hamiltonian $H_0+V$.
The matrix elements \cite[Eq.~(A.2.4)]{Llo-1971}%, as stated in Eq.~(A.2.4) of \cite{Llo-1971},
of this unitary transformation with respect to the free eigenfunctions
are of the same form as in \eqref{eq-ToeplitzEintraege} with $\delta_L$ replaced by the phase factor of the potential,
which allowed him to justify the exponent in Anderson's orthogonality catastrophe.
\end{rem}

%%%%%%%%%%%%%%%%%%%%%%%%%%%%%%%%%%%%%%%%%%%%%%%%%%%%%%%%%%%%%%%%%%
\subsection{Proof of Theorem~\ref{thm: upper bound}}\label{sec: proof 2}

%%%%%%%%%%%%%%%%%%%%%%%%%%%%%%%%%%%%
\paragraph{Periodic boundary conditions and the Anderson integral}

We first show Theorem~\ref{thm: upper bound} for periodic boundary conditions.
To this end we study the asymptotic behaviour of the corresponding Anderson integral.
Again, it suffices to study the asymptotics of the Toeplitz determinant, $\widetilde{D}_{N,L}$,
with the discontinuous symbol $\e^{\I \tilde{g}_L}$. 
For simplicity we assume $n_L=0$ and $N$ to be odd in the following and set $m:=\frac{N-1}{2}$.
For $n_L \neq 0$ the area of summation has to be shifted by $n_L$ to the left and
for even $N=2m$ the term with $j=m$ has to be removed from the sum.
With the two projections 
$
P_N := \sum\limits_{j=-m}^m \langle \varphi_{L,j}, \,\cdot\,\rangle \varphi_{L,j} \quad \text{and} \quad
\widetilde{\Pi}_N := \sum\limits_{j=-m}^m \langle \e^{\I \tilde{g}_L} \varphi_{L,j}, \,\cdot\,\rangle \e^{\I \tilde{g}_L} \varphi_{L,j}
$
we have
\begin{align*}
\left|\widetilde{D}_{N,L}\right|^2 = \left| \det\!\left( \langle \varphi_{L,j},\e^{\I \tilde{g}_L} \varphi_{L,k} \rangle \right)_{j,k=-m,\dots,m} \right|^2
= \det\!\left(P_N \widetilde{\Pi}_N P_N\right)\,.
\end{align*}
The inequality $\det(A) \leq \exp\!\left(- \tr(\mathds{1}-A) \right)$, which holds for trace class operators $A$, yields
\begin{align*}
\det\!\left(P_N \widetilde{\Pi}_N P_N\right) \leq \exp\!\left(- \tr\!\left(P_N(\mathds{1} - \widetilde{\Pi}_N)P_N\right) \right)
                                                   = \exp\!\left( - \mathcal{I}_{N,L} \right)
\end{align*}
where
$
\mathcal{I}_{N,L} := \sum\limits_{j=-m}^m \left[ \sum\limits_{k=m+1}^{\infty} \left| \langle \varphi_{L,j},
\e^{\I \tilde{g}_L} \varphi_{L,k} \rangle \right|^2 + \sum\limits_{k=-\infty}^{-m-1} \left| \langle \varphi_{L,j},
\e^{\I \tilde{g}_L} \varphi_{L,k} \rangle \right|^2 \right]
$
is the Anderson integral.
Since
$\langle \varphi_{L,j},\e^{\I \tilde{g}_L} \varphi_{L,k} \rangle = \frac{\sin(\delta_L)}{\pi(j-k)+\delta_L}
$, 
the individual summands of the Anderson integral only depend on the difference of the indices $j-k$.
Rescaling and renaming the summation variables we obtain
\begin{align*}
\frac{\pi^2}{\sin^2(\delta_L)} \mathcal{I}_{N,L} &= \sum\limits_{j=1}^N
\sum\limits_{k=1}^{\infty} \frac{1}{\left(k+j-1-\frac{\delta_L}{\pi}\right)^2}
+ \sum\limits_{j=1}^N \sum\limits_{k=1}^{\infty} \frac{1}{\left(-k-j+1-\frac{\delta_L}{\pi}\right)^2}\\
                             &=\sum\limits_{k=1}^N \frac{k}{\left(k-1-\frac{\delta_L}{\pi}\right)^2}
+ N \sum\limits_{k=N+1}^{\infty} \frac{1}{\left(k-1-\frac{\delta_L}{\pi}\right)^2} \notag\\
&\qquad+ \sum\limits_{k=1}^N \frac{k}{\left(-k+1-\frac{\delta_L}{\pi}\right)^2}
+ N \sum\limits_{k=N+1}^{\infty} \frac{1}{\left(-k+1-\frac{\delta_L}{\pi}\right)^2}\,.
\end{align*}
On the one hand, the second and the fourth sum on the right-hand side are mono\-to\-nous\-ly
decaying at least as $\frac{1}{N}$ for large $N$.
On the other hand, the first and the third sum are diverging for $N \to \infty$
and their respective asymptotics are $\ln(N) + O(1)$.
Hence the leading term of the asymptotics of the Anderson integral is
$
\mathcal{I}_{N,L} = \frac{2}{\pi^2} \sin^2(\delta) \ln(N) + O(1)
$
as $N,L \to \infty$ with $\frac{N}{2L} \to \rho>0$.
Together with Lemma~\ref{lem-main} the above derivation yields
\begin{align*}
\left|\det \big( \langle \varphi_{L,j}, \psi_{L,k} \rangle \big)_{j,k=-m,\ldots,m} \right|^2
\leq N^{-\frac{2}{\pi^2} \sin^2(\delta)} (\widetilde{c} + o(1))\,,
\end{align*}
thus finishing the proof of Theorem~\ref{thm: upper bound} for periodic boundary conditions.

%%%%%%%%%%%%%%%%%%%%%%%%%%%%%%%%%%%%%%%
\paragraph{Dirichlet boundary conditions and the Hilbert matrix}

For the Dirichlet case we use a different way to prove the respective assertion of Theorem~\ref{thm: upper bound}.
This approach has the advantage that we have equalities until the very last step where the only estimate of this proof comes into play.
It shows a possible way to tackle the problem of the exact asymptotics of the overlap
as already mentioned in Remark~\ref{rem: Dirichlet}.

For simplicity we consider the particle number to be even and set $N=:2M$ throughout this subsection.
For odd particle numbers $N$ some of the following formulae are slightly more complicated, but the general method works as well.
In order to further simplify $\widetilde{D}_{N,L} := \det T_N(\e^{\I \tilde{g}_L})$ we have a closer look at the matrix elements
$\langle\varphi_{j},e^{i \tilde{g}_{L}}\varphi_{k} \rangle$
that appear in this determinant.
By the trigonometric addition theorems we find
\begin{align*}
  \varphi_{L,j}(x)\varphi_{L,k}(x)
   & =
\begin{cases}
  \frac{1}{2L} \left( \cos\!\left(\frac{\pi (j-k)}{2L}x\right) - \cos\!\left(\frac{\pi (j+k)}{2L}x\right) \right) & \text{for}\ j,k \ \text{even}\,, \\
  \frac{1}{2L} \left( \cos\!\left(\frac{\pi (j-k)}{2L}x\right) + \cos\!\left(\frac{\pi (j+k)}{2L}x\right) \right) & \text{for}\ j,k \ \text{odd}\,, \\
  \frac{1}{2L} \left( \sin\!\left(\frac{\pi (j-k)}{2L}x\right) + \sin\!\left(\frac{\pi (j+k)}{2L}x\right) \right) & \text{for}\ j \ \text{even and}\ k\ \text{odd}\,, \\
  \frac{1}{2L} \left( - \sin\!\left(\frac{\pi (j-k)}{2L}x\right) + \sin\!\left(\frac{\pi(j+k)}{2L}x\right) \right) & \text{for}\ j\ \text{odd and}\ k \ \text{even}\,.
\end{cases}
\end{align*}
We obtain
$\int\limits_{-L}^L e^{\I \tilde{g}_L(x)} \cos\!\left(\frac{\pi m}{2L}x\right)\di x = 0$ 
\ \ for\ \ $m \neq 0$, 
$\int\limits_{-L}^L e^{\I \tilde{g}_L(x)} \cos\!\left(\frac{\pi m}{2L}x\right) \di x = 2 L \cos\!\left(\Phi_L(L)\right)$\ \
for\ \ $m=0$\  and $
 \int_{-L}^L e^{\I \tilde{g}_L(x)} \sin\!\left(\frac{\pi m}{2L}x\right) \di x
     = \I \frac{4L}{\pi m} \sin\!\left(\Phi_L(L)\right)$
for odd $m$.
Merging these results yields
\begin{align*}
 \langle\varphi_{L,j},e^{\I \tilde{g}_L}\varphi_{L,k}\rangle = 
\begin{cases}
  \cos\!\left(\Phi_L(L)\right) & \text{for}\ j=k\,, \\
  \frac{2 \I}{\pi}\sin\!\left(\Phi_L(L)\right) \big[ \frac{1}{j+k} + \frac{1}{j-k} \big] & \text{for}\ j\ \text{even and}\ k\ \text{odd},\\
  \frac{2 \I}{\pi}\sin\!\left(\Phi_L(L)\right) \big[ \frac{1}{j+k} - \frac{1}{j-k} \big] & \text{for}\ j\ \text{odd and}\ k\ \text{even, and}\\
  0 & \text{for}\ j\pm k \ \text{even}.
\end{cases}
\end{align*}
In symbolic notation the resulting determinant has the form
\begin{align*}
  \left|\widetilde{D}_{N,L}\right|^2 & = \left|\det\!
\begin{pmatrix}
  c_1c_1 & c_1s_2    & c_1c_3    & \cdots & c_1s_{2M} \\
  s_2c_1 & s_2s_2    & s_2c_3    & \cdots & s_2s_{2M} \\
  \vdots &           &           &        &    \vdots \\
  s_{2M} & s_{2M}s_2 & s_{2M}c_3 & \cdots & s_{2M}s_{2M}
\end{pmatrix}
  \right|^2\\
  & = \left|\det\!
\begin{pmatrix}
  c_1c_1     & c_1c_3 & \cdots & c_1c_{2M-1}     & c_1s_2  & \cdots   & c_1s_{2M} \\
  \vdots    &        &        &               &         &         &           \\ 
  c_{2M-1}c_1 & \cdots &        & c_{2M-1}c_{2M-1} & c_{2M-1}s_2 & \cdots & c_{2M-1}s_{2M} \\
  s_2c_1     & \cdots &        & s_2c_{2M-1}     & s_2s_2    & \cdots & s_2s_{2M} \\
  \vdots    &        &        &               &           &        &           \\
  s_{2M}c_1  & \cdots &        & s_{2M}c_{2M-1}   & s_{2M}s_2 & \cdots  & s_{2M}s_{2M}
\end{pmatrix}
\right|^2
\end{align*}
where we interchanged columns and rows for the last identity.
Due to the particular structure of the matrix elements we obtain
\begin{align*}
  \left|\widetilde{D}_{N,L}\right|^2 = \left| \det\!
\begin{pmatrix}
  \cos\!\left(\Phi_L(L)\right)\id                & -\frac{2\I}{\pi}\sin\!\left(\Phi_L(L)\right) H_M^r \\
  -\frac{2\I}{\pi}\sin\!\left(\Phi_L(L)\right) H_M^l & \cos\!\left(\Phi_L(L)\right)\id
\end{pmatrix}
\right|^2\,.
\end{align*}
Here the two $M\times M$-matrices $H_M^l$ and $H_M^r$ have the entries
$\frac{1}{2}\left(\frac{1}{j+k-\frac{1}{2}} + \frac{1}{j-k+\frac{1}{2}}\right)$
and
$\frac{1}{2} \left(  \frac{1}{j+k-\frac{1}{2}} - \frac{1}{j-k-\frac{1}{2}} \right)$, respectively.
The determinant is further reduced to
\begin{align*}
  \left|\widetilde{D}_{N,L}\right|^2 &= \left|\det\!\left( \cos^2\!\left(\Phi_L(L)\right) \id + \frac{4}{\pi^2}\sin^2\!\left(\Phi_L(L)\right) H_M^lH_M^r \right) \right|^2\\
            &= \left| \det\!\left(\id - \frac{4}{\pi^2}\sin^2\!\left(\delta_L\right) K_M\right) \right|\,,
\end{align*}
using the $\pi$-periodicity of $\sin^2$ and $\Phi_L(L) = n_L \pi + \delta_L$ with $n_L \in \Z\,,\ |\delta_L| \leq \frac{\pi}{2}$.
The matrix $K_M$ is defined by
$
  (K_M)_{jk} := jk\sum\limits_{l=M+1}^\infty \frac{1}{\left[ \left(l-\frac{1}{2}\right)^2-j^2\right]\left[ \left(l-\frac{1}{2}\right)^2-k^2\right]}$.
Using the elementary formula
$
  \frac{jk}{\left[ \left(l-\frac{1}{2}\right)^2-j^2\right] \left[ \left(l-\frac{1}{2}\right)^2-k^2\right]}
    = \frac{1}{4}\left[ \frac{1}{l-\frac{1}{2}-j} - \frac{1}{l-\frac{1}{2}+j}\right]
                 \left[ \frac{1}{l-\frac{1}{2}-k} - \frac{1}{l-\frac{1}{2}+k}\right]
$
we obtain the decomposition $K_M= K_M^{--} + K_M^{+-} + K_M^{-+} + K_M^{++}$\,. Here the matrices are given by
$  \left(K_M^{--}\right)_{jk} := \frac{1}{4} \sum\limits_{l=M+1}^\infty \frac{1}{l-\frac{1}{2}-j} \frac{1}{l-\frac{1}{2}-k}$\,,
$  \left(K_M^{+-}\right)_{jk} := -\frac{1}{4} \sum\limits_{l=M+1}^\infty \frac{1}{l-\frac{1}{2}+j} \frac{1}{l-\frac{1}{2}-k}$\,,
$  \left(K_M^{-+}\right)_{jk} := -\frac{1}{4} \sum\limits_{l=M+1}^\infty \frac{1}{l-\frac{1}{2}-j} \frac{1}{l-\frac{1}{2}+k}$ and
$  \left(K_M^{++}\right)_{jk} := \frac{1}{4} \sum\limits_{l=M+1}^\infty \frac{1}{l-\frac{1}{2}+j} \frac{1}{l-\frac{1}{2}+k}$\,.
Introducing the operators
$A := \left( \frac{1}{j-\frac{1}{2}-k} \right)_{j,k\in\N}$,
$B := \left( \frac{1}{j-\frac{1}{2}+k} \right)_{j,k\in\N}$ and
the orthogonal projection $P_M := \sum\limits_{k=1}^M \langle \varphi_{L,k},\,\cdot\,\rangle \varphi_{L,k}$
we obtain
\begin{align*}
  K_M^{--} &= \tfrac{1}{4} P_MA^*(\id-P_M)AP_M\,, \quad K_M^{+-} = -\tfrac{1}{4} P_MB(\id-P_M)AP_M\,, \\
  K_M^{-+} &= -\tfrac{1}{4} P_MA^*(\id-P_M)BP_M\,, \quad K_M^{++} = \tfrac{1}{4} P_MB(\id-P_M)BP_M\,.
\end{align*}
Note that $B$ is self-adjoint, $B^*=B$.
In order to estimate the determinant we need the trace norms of these operators.
To begin with,
\begin{align*}
  \left\|K_M^{++}\right\|_1 = \frac14\left\|P_MB(\id-P_M)BP_M\right\|_1
 = \frac14\sum_{j=1}^M\sum_{k=M+1}^\infty \frac{1}{\left(j+k-\frac{1}{2}\right)^2}
 = O(1)\,.
\end{align*}
Analogously, we obtain
\begin{align*}
  \left\|K_M^{--}\right\|_1 
    & = \frac14\left\|P_MA^*(\id-P_M)AP_M\right\|_1 
       = \frac14\sum_{j=1}^M\sum_{k=M+1}^\infty \frac{1}{\left(j-\frac{1}{2}-k\right)^2}\\
    & = \frac14\sum_{j=1}^M \sum_{k=j}^\infty \frac{1}{\left(k+\frac{1}{2}\right)^2}
       = \frac14\ln(N) + O(1)
\end{align*}
for large $N=2M$. The mixed terms are estimated via the Cauchy--Schwarz inequality:
\begin{align*}
  \left\|K_M^{\pm\mp}\right\|_1 \leq \frac14\left\|P_MA^*(\id-P_M)\right\|_2 \left\|P_MB(\id-P_M)\right\|_2 = O\big(\sqrt{\ln(N)}\big)\,.
\end{align*}
Since $K_M^{--}$ gives the leading term for large $N$, we need the operator norm of $K_M^{--}$.
To this end we first rewrite the matrix elements as
\begin{align*}
  \frac{1}{4} \sum_{l=M+1}^\infty \frac{1}{l-\frac{1}{2}-j} \frac{1}{l-\frac{1}{2}-k}
     = \frac{1}{4} \sum_{l=1}^\infty \frac{1}{l+M -\frac{1}{2}-j} \frac{1}{l+M-\frac{1}{2}-k}\,.
\end{align*}
Next we define the operator $\Theta_M:\C^M\to\C^M$ by $(\Theta_M u)_j := u_{M-j}$ for any $j=1,\ldots,M$.
Obviously, $\Theta_M^2=\id$ and $\Theta_M^*=\Theta_M$.
In particular, $\Theta_M$ is unitary. 
Furthermore, we need the infinite Hilbert matrix $H_{-\frac12} := \left( \frac{1}{j+k-\frac12} \right)_{j,k\in\N}$.
Then $K_M^{--} = \frac{1}{4}\Theta_M P_M H_{-\frac12}^2 P_M \Theta_M$ and
$\left\| K_M^{--}\right\|_{\text{op}} \leq \frac14 \left\| H_{-\frac12} \right\|^2_{\text{op}} = \frac{\pi^2}{4}$
by the well-known result $\left\| H_{-\frac12}\right\|_{\text{op}} = \pi$ \cite{Sch-1911, Jam-2006}.
We conclude that the operator $\id-\frac{4}{\pi^2}\sin^2(\delta_L) K_M^{--}$ is invertible since
$\left\| \sin^2(\delta_L) K_M^{--} \right\|_{\mathrm{op}} < \frac{\pi^2}{4}$ for $0 < |\delta_L| < \frac{\pi}{2}$.
Therefore this operator yields the leading term and we have
\begin{align*}
&\left| \det\!\left( \langle \varphi_{L,j}, \psi_{L,k} \rangle \right)_{j,k=1,\ldots,N} \right|^2 = \left| \det\!\left( \id - \frac{4}{\pi^2} \sin^2\!\left(\delta_L\right) K_M^{--}\right)\right|^2\\ 
&\qquad \times \left|\det\!\left( \id - \left[\id - \frac{4}{\pi^2} \sin^2\!\left(\delta_L\right) K_M^{--}\right]^{-1} \frac{4}{\pi^2} \sin^2\!\left(\delta_L\right) \left[K_M^{++} + K_M^{+-} + K_M^{-+}\right] \right) \right|^2\,.
\end{align*}
The second determinant on the r.\,h.\,s. is bounded (with similar arguments as in the proof of the Lemma~\ref{lem-main}). 
The asserted upper bound is obtained by using the inequality $\det (\id+A) \leq  \exp\!\left(\tr A\right)$,
\begin{align*}
\left| \det\!\left( \id - \frac{4}{\pi^2} \sin^2\!\left(\delta_L\right) K_M^{--}\right)\right|^2 &\leq \exp\!\left(- \frac{8}{\pi^2}\sin^2(\delta_L) \left\|K_M^{--}\right\|_1\right)\\ 
&= \exp\!\left(-\frac{2\sin^2(\delta)}{\pi^2} \ln(N) + O\left(\sqrt{\ln(N)}\right)\right)\,,
\end{align*}
which completes the proof of Theorem~\ref{thm: upper bound}.

%%%%%%%%%%%%%%%%%%%%%%%%%%%%%%%%%%%%%%%%%%%%%%%%%%%%%%%%%%%%%%
\section{Singling out the effective flux - proof of Lemma~\ref{lem-main}}\label{sec: decomposition}
%%%%%%%%%%%%%%%%%%%%%%%%%%%%%%%%%%%%%%%%%%%%%%%%%%%%%%%%%%%%%%

In the following we prove the first assertion in Lemma~\ref{lem-main}.
As a first step, we show that $T_N(\e^{\I \tilde{g}_L})$ has a bounded inverse.
This is easily concluded from the following general properties of a Toeplitz matrix.
Due to the special structure these properties of a Toeplitz matrix are taken over from its symbol.
We state them without proof.

\begin{prop}\label{prop-1}
\begin{enumerate}[(i)]
\item The map $f\mapsto T_N(f)$ is linear.
\item If $f$ is real valued, $T_N(f)$ is self-adjoint, i.\,e. $T_N(f)^*=T_N(f)$.
\item If $f\geq 0$ ($f\leq 0$), then $T_N(f)$ is positive (negative) semidefinite.\label{prop-1-3}
\item If $\pm\re (f)\geq \delta>0$\,, then $T_N(f)$ is invertible with $\|T_N(f)^{-1}\|\leq \frac{1}{\delta}$.\label{prop-1-4}
\end{enumerate}
\end{prop}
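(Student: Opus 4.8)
The plan is to verify the four listed properties directly from the definition $T_N(f) = \left(\langle\varphi_j,f\varphi_k\rangle\right)_{j,k=1,\ldots,N}$, treating $T_N(f)$ both as a matrix and as the compression $P_N f P_N$ of the multiplication operator by $f$ onto the span of $\varphi_1,\ldots,\varphi_N$, where $P_N$ denotes the orthogonal projection onto that span. The whole point is that the sesquilinear-form structure transfers the pointwise properties of $f$ to spectral properties of the compression, so each item is a one- or two-line argument.

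For (i), linearity of $f\mapsto T_N(f)$ is immediate because each matrix entry $\langle\varphi_j,f\varphi_k\rangle$ depends linearly on $f$ (the integral is linear in $f$). For (ii), if $f$ is real valued then $f$ acts as a self-adjoint multiplication operator, so $\overline{\langle\varphi_j,f\varphi_k\rangle} = \langle f\varphi_j,\varphi_k\rangle = \langle\varphi_j,f\varphi_k\rangle^{*}$ rearranges to $\langle\varphi_k,f\varphi_j\rangle$; hence the matrix equals its conjugate transpose, i.e. $T_N(f)^{*}=T_N(f)$. For (iii), suppose $f\geq 0$ and let $u=(u_1,\ldots,u_N)^{\top}\in\C^N$; set $\phi:=\sum_{k=1}^N u_k\varphi_k$. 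Then
\begin{align*}
  \langle u, T_N(f)u\rangle = \sum_{j,k} \overline{u_j}\,u_k\,\langle\varphi_j,f\varphi_k\rangle = \langle \phi, f\phi\rangle = \int \overline{\phi}(x)\,f(x)\,\phi(x)\,\di x = \int f(x)\,|\phi(x)|^2\,\di x \geq 0\,,
\end{align*}
so $T_N(f)$ is positive semidefinite; the case $f\leq 0$ follows by applying this to $-f$ and using (i).

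For (iv), assume $\re(f)\geq\delta>0$ (the case $-\re(f)\geq\delta$ is analogous, or apply the result to $-f$). With the same notation $\phi=\sum_k u_k\varphi_k$, the Cauchy--Schwarz inequality gives
\begin{align*}
  \|u\|\,\|T_N(f)u\| \geq \left|\langle u, T_N(f)u\rangle\right| \geq \re\langle u, T_N(f)u\rangle = \re\langle\phi,f\phi\rangle = \int \re(f(x))\,|\phi(x)|^2\,\di x \geq \delta\,\|\phi\|^2 = \delta\,\|u\|^2\,,
\end{align*}
where the last equality uses orthonormality of the $\varphi_k$. Hence $\|T_N(f)u\|\geq\delta\|u\|$ for all $u$, which forces $T_N(f)$ to be injective — hence invertible on the finite-dimensional space $\C^N$ — and yields $\|T_N(f)^{-1}\|\leq\frac1\delta$. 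I do not anticipate a genuine obstacle here; the only points requiring a modicum of care are making the identification $\langle u,T_N(f)u\rangle = \langle\phi,f\phi\rangle$ explicit (it is just bilinear expansion together with orthonormality of the basis, valid for both the periodic plane-wave basis and the Dirichlet sine/cosine basis) and noting that for (iv) injectivity on a finite-dimensional space already gives invertibility, so no separate surjectivity argument is needed.
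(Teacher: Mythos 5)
Your proof is correct, and it fills in exactly what the paper leaves out: the paper states all four items explicitly without proof, only remarking that item (iv) is ``a corollary of the Lax--Milgram Theorem.'' Your argument for (iv) is in fact the finite-dimensional coercivity argument that underlies Lax--Milgram --- $|\langle u,T_N(f)u\rangle|\geq\re\langle u,T_N(f)u\rangle=\int\re(f)|\phi|^2\geq\delta\|u\|^2$, Cauchy--Schwarz, injectivity implies invertibility in $\C^N$ --- so it is the same idea, merely unfolded rather than cited. There is no genuinely different route here; nothing is lost or gained relative to the paper's hint, except that your version is self-contained.

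One small bookkeeping slip in (ii): with the paper's convention $\langle f,g\rangle=\int\overline{f}\,g$, conjugate symmetry gives $\overline{\langle\varphi_j,f\varphi_k\rangle}=\langle f\varphi_k,\varphi_j\rangle$, not $\langle f\varphi_j,\varphi_k\rangle$ as you wrote; using self-adjointness of multiplication by real $f$ then yields $\langle\varphi_k,f\varphi_j\rangle$, which is the conclusion you state, so the final identity $T_N(f)^*_{jk}=\overline{\langle\varphi_k,f\varphi_j\rangle}=\langle\varphi_j,f\varphi_k\rangle=T_N(f)_{jk}$ is correct. Items (i), (iii), and (iv) are fine as written.
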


\begin{rem}
Assertion~{\itshape{(\ref{prop-1-4})}} is a corollary of the Lax--Milgram Theorem.
\end{rem}

%%%%%%%%%%%%%%%%%%%%
\subsection{Periodic boundary conditions}

The real part of the symbol of $T_N\big(\e^{\I \tilde{g}_L}\big)$ is
\begin{align*}
  (-1)^{n_L}\re\big(\e^{\I \tilde{g}_L (x)}\big) = (-1)^{n_L}\re\big( \e^{\pm\I \Phi_L(L) - \I \frac{\delta_L}{L}x} \big) \geq \cos(\delta_L)
\end{align*}
for all $x\in[-L,L]$, since $|\delta_L| < \frac{\pi}{2}$ and
$\tilde{g}_L(x) \in \left] - n_L\pi-\delta_L,-n_L\pi\right] \cap \left[n_L\pi,n_L\pi+\delta_L\right]$.
Then Proposition~\ref{prop-1}{\itshape{(\ref{prop-1-4})}} implies that $T_N\big(\e^{\I \tilde{g}_L}\big)$ is invertible and
\begin{align}\label{eq-boundedinverse}
  \left\| T_N\!\left(\e^{\I \tilde{g}_L}\right)^{-1} \right\| \leq \frac{1}{\left|\cos(\delta_L)\right|}\,.
\end{align}

In order to investigate $T_N\big(\e^{\I g_L}\big)$ we rewrite it in the following way:
\begin{align*}
T_N\big(\e^{\I g_L}\big) = T_N\big(\e^{\I \tilde{g}_L}\big) \Big[ \id + T_N\big(\e^{\I \tilde{g}_L}\big)^{-1}
\Big( T_N\big(\e^{\I g_L}\big) - T_N\big(\e^{\I \tilde{g}_L}\big) \Big) \Big]\,.
\end{align*}
Eventually, we want to study determinants and therefore need the trace class properties of the
difference $\Delta_N := T_N(\e^{\I g_L}) - T_N(\e^{\I \tilde{g}_L})$. Since the trace norm can be computed
most easily for semidefinite matrices, we first decompose $\Delta_N$ into hermitian matrices:
$\Delta_N = T_N(e^+) + T_N(e^-) - \I T_N(f^+) + \I T_N(f^-)$.
This decomposition is used to obtain an upper bound for the trace norm of $\Delta_N$:
\begin{align*}
\left\| \Delta_N\right\|_1 \leq \left\| T_N(e^+)\right\|_1 + \left\| T_N(e^-)\right\|_1
+ \left\| T_N(f^+)\right\|_1 + \left\| T_N(f^-)\right\|_1\,.
\end{align*}
By linearity, it suffices to decompose the respective symbols. We start by splitting the symbol into real and imaginary part
\begin{align}\label{eq-1}
  \e^{\I g_L(x)} - \e^{\I \tilde{g}_L(x)}
    =  \cos(g_L(x)) - \cos(\tilde{g}_L(x))
       + i \big[\sin(g_L(x)) - \sin(\tilde{g}_L(x) ) \big]\,.
\end{align}
We recall the trigonometric identities
$  \cos(u) - \cos(v) = -2\sin\!\left(\frac{u+v}{2}\right)\sin\!\left(\frac{u-v}{2}\right)$ and
$  \sin(u) - \sin(v) = 2\cos\!\left(\frac{u+v}{2}\right)\sin\!\left(\frac{u-v}{2}\right)$
which are used to rewrite the real and imaginary part. Then the difference $g_L(x)-\tilde{g}_L(x)$ and the
sum $g_L(x)+\tilde{g}_L(x)$ occur.
We first compute the difference and obtain
\begin{align}
  g_L(x) - \tilde{g}_L(x)
&= -\int\limits_x^{+L} a(y)\,\di y =: - \Phi_L^-(x) \ \text{ for }\ x\geq0\quad \text{ and }\label{eq-Phi-}\\
  g_L(x) - \tilde{g}_L(x)
&= \int\limits_{-L}^x a(y)\,\di y = : \Phi_L^+(x)\ \text{ for }\ x<0\,. \label{eq-Phi+}
\end{align}
For the sum we find
\begin{align*}
  g_L(x) + \tilde{g}_L(x)
&= \Phi_L^+(x)-2\frac{\delta_L}{L}x\ \text{ for }\ x\geq 0\quad \text{ and }\\
  g_L(x) + \tilde{g}_L(x)
&= -\Phi_L^-(x) - 2 \frac{\delta_L}{L}x\ \text{ for }\ x<0\,.
\end{align*}
Rewriting the trigonometric functions in the real part on the r.\,h.\,s. of Eq.~\eqref{eq-1} we
obtain
\begin{align*}
  \cos(g_L(x)) - \cos(\tilde{g}_L(x))
    &= - 2\sin\!\left( \frac{1}{2}\Phi_L^+(x) - \frac{\delta_L}{L}x\right)\sin\!\left( -\frac{1}{2}\Phi_L^-(x)\right)\ \text{ for }\ x \geq 0\ \text{ and }\\
  \cos(g_L(x)) - \cos(\tilde{g}_L(x))
    &=  -2\sin\!\left( -\frac{1}{2}\Phi_L^-(x) - \frac{\delta_L}{L}x\right)\sin\!\left(\frac{1}{2}\Phi_L^+(x)\right)\ \text{ for }\ x < 0\,.
\end{align*}
Analogous computations yield for the imaginary part
\begin{align*}
  \sin(g_L(x)) - \sin(\tilde{g}_L(x))
    &= - 2\cos\!\left(\frac{1}{2}\Phi_L^+(x) - \frac{\delta_L}{L}x\right) \sin\!\left(\frac{1}{2}\Phi_L^-(x)\right)\ \text{ for }\ x \geq0\ \text{ and }\\
  \sin(g_L(x)) - \sin(\tilde{g}_L(x))
    &= 2 \cos\!\left(\frac{1}{2}\Phi_L^-(x)+ \frac{\delta_L}{L}x\right) \sin\!\left(\frac{1}{2}\Phi_L^+(x)\right)\ \text{ for }\ x < 0\,.
\end{align*}
This yields a decomposition into four hermitian Toeplitz matrices with the symbols
\begin{align*}
  e^+ (x) &:= 2 \Theta(x) \sin\!\left(\frac{1}{2}\Phi_L^+(x)-\frac{\delta_L}{L}x\right) \sin\!\left(\frac{1}{2}\Phi_L^-(x)\right)\,,\\
  e^- (x) &:= 2 \Theta(-x) \sin\!\left(\frac{1}{2}\Phi_L^-(x)+\frac{\delta_L}{L}x\right) \sin\!\left(\frac{1}{2}\Phi_L^+(x)\right)\,,\\
  f^+(x) &:= 2\Theta(x)\cos\!\left(\frac{1}{2} \Phi_L^+(x)-\frac{\delta_L}{L}x\right) \sin\!\left(\frac{1}{2}\Phi_L^-(x)\right)\,,\\
  f^-(x) &:= 2\Theta(-x)\cos\!\left(\frac{1}{2} \Phi_L^-(x)+\frac{\delta_L}{L}x\right) \sin\!\left(\frac{1}{2} \Phi_L^+(x)\right)\,,
\end{align*}
where $\Theta$ denotes the Heaviside function, $\Theta (x) := \begin{cases} 0 & \text{ for } x<0\,,\\ 1 & \text{ for } x\geq 0\,.\end{cases}$

However, these four hermitian matrices are indefinite in general and we have to estimate their trace norm.
In view of Proposition~\ref{prop-1}{\itshape{(\ref{prop-1-3})}} we decompose the symbols in their positive and their negative part,
e.\,g.
\begin{align*}
e_{\geq}^\pm(x) := \max\!\left(0,e^\pm(x)\right) \quad \text{and} \quad e_{<}^\pm(x) := \min\!\left(0,e^\pm(x)\right)\,.
\end{align*}
Thus we split each of the four hermitian Toeplitz matrices into a positive semi-definite and a negative definite Toeplitz matrix.
We give the arguments for the decomposition of $e^+$, the estimates for the other three cases are obtained
similarly. We define the sets 
\begin{align*}
I_{\geq} := \left\{ x \in [0,L]\mid e^+(x)\geq0\right\} \quad \text{and}\quad
I_{<} := \left\{ x \in [0,L]\mid e^+(x)<0\right\}
\end{align*}
which yields a disjoint partition of the interval $[0,L]$. Note that $e^+(x)=0$ for all $x<0$.
Recall the definition of $\Phi_L^-$ in \eqref{eq-Phi-} and the simple inequalities $|\sin(x)|\leq 1$ and $|\sin(x)|\leq |x|$. 
Using them we find for the matrix elements
\begin{align*}
  \left|\left(T_N(e_{\geq}^+)\right)_{kk}\right| 
   &\leq \frac{1}{2L} \int\limits_{I_{\geq}} 2\,\left|\sin\!\left(\frac{1}{2}\Phi_L^+(x) - \frac{\delta_L}{L}x\right)\right|
                          \left| \sin\!\left(\frac{1}{2}\Phi_L^-(x)\right)\right| \di x \\
   &\leq \frac{1}{2L} \int\limits_{I_{\geq}} \int\limits_x^L \left|a(y)\right| \di y\, \di x
\end{align*}
and analogously
\begin{align*}
 \left|\left(T_N(e_{<}^+)\right)_{kk}\right| \leq \frac{1}{2L} \int\limits_{I_{<}} \int\limits_x^L \left|a(y)\right| \di y\, \di x\,.
\end{align*}
Since $I_{\geq} \dot{\cup} I_{<} = [0,L]$ and $\int_0^L \int_x^L \left|a(y)\right| \di y\, \di x = \int_0^L y \left|a(y)\right| \di y$, we have
\begin{align*}
\left\| T_N(e^+) \right\|_1 \leq \left\| T_N(e_{\geq}^+)\right\|_1 + \left\| T_N(e_{<}^+)\right\|_1
\leq \frac{N}{2L} \int\limits_0^L y \left|a(y)\right| \di y
\end{align*}
Bounds for the matrix elements of $T_N(e^-)$ and $T_N(f^\pm)$ are found likewise: 
\begin{align*}
  \left\| T_N(f^+) \right\|_1 \leq \frac{N}{2L} \int\limits_0^L y \left|a(y)\right| \di y \ \text{and}\
  \left\| T_N(e^-) \right\|_1+\left\| T_N(f^-) \right\|_1\leq \frac{N}{L} \int\limits_{-L}^0 \left|ya(y)\right| \di y\,.
\end{align*}
Altogether we have 
\begin{align}\label{eq-boundeddiff}
\left\| \Delta_N \right\|_1 \leq \frac{N}{L} \int\limits_{-L}^L \left|ya(y)\right| \di y\,.
\end{align}

Merging the estimates given in Eqs.~\eqref{eq-boundedinverse} and \eqref{eq-boundeddiff} we obtain the bound
\begin{align*}
  \left\| T_N(\e^{\I \tilde{g}_L})^{-1}\left(T_N(\e^{\I g_L})- T_N(\e^{\I \tilde{g}_L})\right)\right\|_1
     &\leq \left\| T_N(\e^{\I \tilde{g}_L})^{-1}\right\| \left\|\Delta_N\right\|_1\notag\\
     &\leq \frac{1}{\left|\cos(\delta_L)\right|} \frac{N}{L}\int\limits_{-L}^L \left|ya(y)\right| \di y\,.
\end{align*}
In order to ensure that the determinant is non-zero, the right-hand side can be made as small as we wish,
e.\,g. if the quotient $\frac{N}{2L}$ is considered to be small.
We conclude that the determinant is estimated by
\begin{align*}
  0 < \delta_1 \leq \left|\det\!\left(\id + T_N(\e^{\I \tilde{g}_L})^{-1}\left(T_N(\e^{\I g_L})-T_N(\e^{\I \tilde{g}_L})\right) \right)\right|^2 \leq \delta_2<\infty
\end{align*}
with appropriate constants $\delta_1$ and $\delta_2$ which finishes the proof of Lemma~\ref{lem-main}. 

\begin{rem}
The bounds on the matrix elements of $T_N(e^\pm)$ and $T_N(f^\pm)$ hold for a general orthonormal basis
of essentially bounded functions:
\begin{align*}
&\left|\left(T_N(e^+)\right)_{jk}\right| + \left|\left(T_N(f^+)\right)_{jk}\right|  \leq 2\|\varphi_j\|_\infty \|\varphi_k\|_\infty \int\limits_0^L y \left|a(y)\right| \di y\quad \text{and}\\
&\left|\left(T_N(e^-)\right)_{jk}\right| + \left|\left(T_N(f^-)\right)_{jk}\right| \leq 2\|\varphi_j\|_\infty \|\varphi_k\|_\infty \int\limits_{-L}^0 \left|y\right| \left|a(y)\right| \di y\,.
\end{align*}
\end{rem}

%%%%%%%%%%%%%%%%%%%%%%%%%%%%%%%%%%%%%%%%%%%%%%%%%%%%%%%%%%%%%%
\subsection{Dirichlet boundary conditions}

The proof for Dirichlet boundary conditions is similar to the one shown above,
except that the term $\frac{\delta_L}{L}x$ is not present in the Dirichlet case.
The difference is that now the decomposition
\begin{align*}
T_N(\e^{\I g_L}) - T_N(\e^{\I \tilde{g}_L}) = E_N - \I F_N^+ + \I F_N^-
\end{align*}
is into three positive semidefinite Toeplitz matrices, instead of four, with symbols
\begin{align*}
  e (x) &:= 2 \Theta(-x) \sin\!\left(\tfrac{1}{2}\Phi_L^-(x)\right) \sin\!\left(\tfrac{1}{2}\Phi_L^+(x)\right)\,,\\
  f^+(x) &:= 2\Theta(x)\cos\!\left(\tfrac{1}{2} \Phi_L^+(x)\right) \sin\!\left(\tfrac{1}{2}\Phi_L^-(x)\right)\,,\\
  f^-(x) &:= 2\Theta(-x)\cos\!\left(\tfrac{1}{2} \Phi_L^-(x)\right) \sin\!\left(\tfrac{1}{2} \Phi_L^+(x)\right)\,.
\end{align*}
The assertion of the Lemma~\ref{lem-main} then follows by adopting the arguments of the previous proof.

%%%%%%%%%%%%%%%%%%%%%%%%%%%%%%%%%%%%%%%%%%%%%%%%%%%%%%%%%%%%%%
\subsection*{Acknowledgment}
We would like to thank M. Gebert, H. K{\"u}ttler, H. Leschke, S. Morozov, P. M{\"u}ller and G. Raikov for fruitful discussions.
We also thank the anonymous referees for carefully reading the manuscript and constructive comments.

\bibliographystyle{KOS-myabbrvnat}

\end{document}